\def\today{\ifcase \month \or
   January \or February \or March \or April \or
   May \or June \or July \or August \or
   September \or October \or November \or December \fi
   \space\number\day , \number\year}
  \newcommand\@dotsep{4.5}
  \def\@tocline#1#2#3#4#5#6#7{\relax
     \ifnum #1>\c@tocdepth % then omit
     \else
     \par \addpenalty\@secpenalty\addvspace{#2}%
     \begingroup \hyphenpenalty\@M
     \@ifempty{#4}{%
     \@tempdima\csname r@tocindent\number#1\endcsname\relax
        }{%
         \@tempdima#4\relax
           }%
      \parindent\z@ \leftskip#3\relax \advance\leftskip\@tempdima\relax
      \rightskip\@pnumwidth plus1em \parfillskip-\@pnumwidth
       #5\leavevmode\hskip-\@tempdima #6\relax
       \leaders\hbox{$\m@th
       \mkern \@dotsep mu\hbox{.}\mkern \@dotsep mu$}\hfill
       \hbox to\@pnumwidth{\@tocpagenum{#7}}\par
       \nobreak
        \endgroup
         \fi}
\begin{document}

%\DeclareRobustCommand{\SkipTocEntry}[4]{} 

\makeatletter
\@addtoreset{figure}{section}
\def\thefigure{\thesection.\@arabic\c@figure}
\def\fps@figure{h,t}
\@addtoreset{table}{bsection}

\def\thetable{\thesection.\@arabic\c@table}
\def\fps@table{h, t}
\@addtoreset{equation}{section}
\def\theequation{%\thesection.
\arabic{equation}}
\makeatother

\newcommand{\bfi}{\bfseries\itshape}

\newtheorem{theorem}{Theorem}
\newtheorem{acknowledgment}[theorem]{Acknowledgment}
\newtheorem{corollary}[theorem]{Corollary}
\newtheorem{definition}[theorem]{Definition}
\newtheorem{example}[theorem]{Example}
\newtheorem{lemma}[theorem]{Lemma}
\newtheorem{notation}[theorem]{Notation}
\newtheorem{proposition}[theorem]{Proposition}
\newtheorem{remark}[theorem]{Remark}

\numberwithin{theorem}{section}
\numberwithin{equation}{section}

\newcommand{\1}{{\bf 1}}
\newcommand{\Ad}{{\rm Ad}}
\newcommand{\Alg}{{\rm Alg}\,}
\newcommand{\Aut}{{\rm Aut}\,}
\newcommand{\ad}{{\rm ad}}
\newcommand{\Borel}{{\rm Borel}}
\newcommand{\Ci}{{\mathcal C}^\infty}
\newcommand{\Cpol}{{\mathcal C}^\infty_{\rm pol}}
\newcommand{\Der}{{\rm Der}\,}
\newcommand{\de}{{\rm d}}
\newcommand{\ee}{{\rm e}}
\newcommand{\End}{{\rm End}\,}
\newcommand{\ev}{{\rm ev}}
\newcommand{\id}{{\rm id}}
\newcommand{\ie}{{\rm i}}
\newcommand{\GL}{{\rm GL}}
\newcommand{\gl}{{{\mathfrak g}{\mathfrak l}}}
\newcommand{\Hom}{{\rm Hom}\,}
\newcommand{\Img}{{\rm Im}\,}
\newcommand{\Ind}{{\rm Ind}}
\newcommand{\Ker}{{\rm Ker}\,}
\newcommand{\Lie}{\text{\bf L}}
\newcommand{\m}{\text{\bf m}}
\newcommand{\pr}{{\rm pr}}
\newcommand{\Ran}{{\rm Ran}\,}
\renewcommand{\Re}{{\rm Re}\,}
\newcommand{\so}{\text{so}}
\newcommand{\spa}{{\rm span}\,}
\newcommand{\Tr}{{\rm Tr}\,}
\newcommand{\Op}{{\rm Op}}
\newcommand{\U}{{\rm U}}

\newcommand{\CC}{{\mathbb C}}
\newcommand{\RR}{{\mathbb R}}
\newcommand{\TT}{{\mathbb T}}

\newcommand{\Ac}{{\mathcal A}}
\newcommand{\Bc}{{\mathcal B}}
\newcommand{\Cc}{{\mathcal C}}
\newcommand{\Dc}{{\mathcal D}}
\newcommand{\Ec}{{\mathcal E}}
\newcommand{\Fc}{{\mathcal F}}
\newcommand{\Hc}{{\mathcal H}}
\newcommand{\Jc}{{\mathcal J}}
\renewcommand{\Mc}{{\mathcal M}}
\newcommand{\Nc}{{\mathcal N}}
\newcommand{\Oc}{{\mathcal O}}
\newcommand{\Pc}{{\mathcal P}}
\newcommand{\Sc}{{\mathcal S}}
\newcommand{\Tc}{{\mathcal T}}
\newcommand{\Vc}{{\mathcal V}}
\newcommand{\Uc}{{\mathcal U}}
\newcommand{\Yc}{{\mathcal Y}}

\newcommand{\Bg}{{\mathfrak B}}
\newcommand{\Fg}{{\mathfrak F}}
\newcommand{\Gg}{{\mathfrak G}}
\newcommand{\Ig}{{\mathfrak I}}
\newcommand{\Jg}{{\mathfrak J}}
\newcommand{\Lg}{{\mathfrak L}}
\newcommand{\Pg}{{\mathfrak P}}
\newcommand{\Sg}{{\mathfrak S}}
\newcommand{\Xg}{{\mathfrak X}}
\newcommand{\Yg}{{\mathfrak Y}}
\newcommand{\Zg}{{\mathfrak Z}}

\newcommand{\ag}{{\mathfrak a}}
\newcommand{\bg}{{\mathfrak b}}
\newcommand{\dg}{{\mathfrak d}}
\renewcommand{\gg}{{\mathfrak g}}
\newcommand{\hg}{{\mathfrak h}}
\newcommand{\kg}{{\mathfrak k}}
\newcommand{\mg}{{\mathfrak m}}
\newcommand{\n}{{\mathfrak n}}
\newcommand{\og}{{\mathfrak o}}
\newcommand{\pg}{{\mathfrak p}}
\newcommand{\sg}{{\mathfrak s}}
\newcommand{\tg}{{\mathfrak t}}
\newcommand{\ug}{{\mathfrak u}}

\newcommand{\ZZ}{\mathbb Z}
\newcommand{\NN}{\mathbb N}
\newcommand{\BB}{\mathbb B}

\newcommand{\ep}{\varepsilon}

\newcommand{\hake}[1]{\langle #1 \rangle }

\newcommand{\scalar}[2]{\langle #1 ,#2 \rangle }
\newcommand{\vect}[2]{(#1_1 ,\ldots ,#1_{#2})}
\newcommand{\norm}[1]{\Vert #1 \Vert }
\newcommand{\normrum}[2]{{\norm {#1}}_{#2}}

\newcommand{\upp}[1]{^{(#1)}}
\newcommand{\p}{\partial}

\newcommand{\opn}{\operatorname}
\newcommand{\slim}{\operatornamewithlimits{s-lim\,}}
\newcommand{\sgn}{\operatorname{sgn}}

\newcommand{\seq}[2]{#1_1 ,\dots ,#1_{#2} }
\newcommand{\loc}{_{\opn{loc}}}

\makeatletter
\title[Magnetic pseudo-differential Weyl calculus on nilpotent Lie groups]{Magnetic pseudo-differential Weyl calculus 
\\ on nilpotent Lie groups}
\author{Ingrid Belti\c t\u a and Daniel Belti\c t\u a}
\address{Institute of Mathematics ``Simion Stoilow'' 
of the Romanian Academy, 
P.O. Box 1-764, Bucharest, Romania}
\email{Ingrid.Beltita@imar.ro}
\email{Daniel.Beltita@imar.ro}
\keywords{pseudo-differential Weyl calculus; magnetic field; Lie group; semidirect product}
\subjclass[2000]{Primary 47G30; Secondary 22E25,22E65,35S05}
%\translator{}
%\dedicatory{}
%\thanks{\textit{File name}: \texttt{top\_Feb1\_2009.tex}}
\date{February 1, 2009}%{\today}
\makeatother

\begin{abstract} 
We develop a pseudo-differential Weyl calculus on nilpotent Lie groups 
which allows one to deal with magnetic perturbations of 
right invariant vector fields. 
For this purpose we investigate 
an infinite-dimensional Lie group constructed as the semidirect product 
of a nilpotent Lie grup and an appropriate function space thereon. 
We single out an appropriate coadjoint orbit in the semidirect product 
and construct our pseudo-differential calculus as a Weyl quantization of that orbit. 
\end{abstract}

\maketitle

\tableofcontents

\section{Introduction}

The Weyl calculus of pseudo-differential operators on $\RR^n$ initiated in \cite{Hor79} is 
a central topic in the theory of linear partial differential equations 
and has been much studied and extended in several directions, 
among which we mention the pseudo-differential Weyl calculus on nilpotent Lie groups systematically developed in~\cite{Me83}. 
In the present paper we focus on a circle of ideas with a similar flavor and 
show that the coadjoint orbits of certain locally convex infinite-dimensional Lie groups 
(in the sense of \cite{Ne06}) can be employed in order to fill in the gap between 
two different lines of investigation motivated by 
the quantum theory: 
\begin{itemize}
\item[-] the magnetic pseudo-differential Weyl calculus on $\RR^n$, initiated independently 
in \cite{KO04} and in \cite{MP04}, and further developed in \cite{IMP07} and other works;  
\item[-] the program of Weyl quantization for coadjoint orbits of some finite-dimensional Lie groups 
including the nilpotent ones (\cite{Wi89}, \cite{Pe94}) and semidirect products 
involving certain semisimple Lie groups
(see \cite{Ca97}, \cite{Ca01}, \cite{Ca07}, and the references therein).
\end{itemize}
Recall that a \emph{magnetic potential} on a Lie group~$G$ is simply a 1-form $A\in\Omega^1(G)$, 
and the corresponding \emph{magnetic field} is $B=dA\in\Omega^2(G)$. 
The purpose of a magnetic pseudo-differential calculus on $G$ is to facilitate the investigation 
on first-order linear differential operators of the form 
\begin{equation}\label{magn_diff_op_G}
-\ie P_0+A(Q)P_0, 
\end{equation}
where 
$P_0$ is a right invariant vector field on $G$ and $A(Q)P_0$ stands 
for the operator defined by the multiplication by the function 
obtained by applying the 1-form $A$ to the vector field $P_0$ at every point in~$G$. 

In the special case of the abelian Lie group $G=(\RR^n,+)$ 
we have $A=A_1\de x_1+\dots+A_n\de x_n\in\Omega^1(\RR^n)$ 
and the operators~\eqref{magn_diff_op_G} on $\RR^n$ 
are precisely the linear partial differential operators determined by the vectors $P_0=(p_1,\dots,p_n)\in\RR^n$, 
\begin{equation}\label{magn_diff_op_RR}
\ie\Bigl(p_1\frac{\partial}{\partial x_1}+\cdots+ p_n\frac{\partial}{\partial x_n}\Bigr)
+\Bigl(p_1A_1(Q)+\cdots+p_nA_n(Q)\Bigr)
=\sum_{j=1}^np_j\Bigl(\ie \frac{\partial}{\partial x_j}+A_j(Q)\Bigr)
\end{equation}
where we denote by $A_1(Q),\dots,A_n(Q)$ the operators of multiplication 
by the coefficients of the 1-form~$A$. 
We refer to \cite{IMP07} for the pseudo-differential calculus 
of the operators~\eqref{magn_diff_op_RR} 
extending the Weyl calculus constructed  
in the non-magnetic case (that is, $A=0$) in the paper~\cite{Hor79}. 

On the other hand, a version of the Weyl calculus for 
right invariant differential operators on nilpotent Lie groups 
has been developed in a series of papers including 
\cite{Me83}, \cite{Mi82}, \cite{Mi86}, \cite{Ma91}, \cite{Pe94}, \cite{Gl04}, \cite{Gl07}, 
and there are remarkable applications of this calculus to various problems 
on partial differential equations on Lie groups. 
See also \cite{An72}, \cite{How77}, \cite{Me81a}, \cite{Me81b}, \cite{HN85} and \cite{BL06} 
for other interesting results related to this circle of ideas. 

For these reasons it is quite natural to try to provide a unifying approach 
to the areas of research mentioned in the preceding two paragraphs. 
It is one of the purposes of the present paper to do that by 
proposing a pseudo-differential calculus on simply connected nilpotent Lie groups 
which takes into account a given magnetic field. 
Our strategy is to pick an appropriate left-invariant space $\Fc$ of functions containing 
the ``coefficients of the magnetic field'' on the Lie group $G$ under consideration 
and then to work within the semidirect product $M=\Fc\rtimes_\lambda G$.  
The latter is in general an infinite-dimensional Lie group,  
and yet we can single out a suitable coadjoint orbit $\Oc$ of $M$ 
which is a \emph{finite-dimensional} symplectic manifold endowed with 
the Kirillov-Kostant-Souriau 2-form 
and is actually symplectomorphic to the cotangent bundle $T^*G$ 
(see Proposition~\ref{orbit1}). 
The spaces of symbols for our pseudo-differential calculus 
will be function spaces on the orbit~$\Oc$, which does not depend 
on the magnetic field. 
However we have to take into account  
a \emph{magnetic predual} $\Oc_*$ for the orbit $\Oc$ 
(Definition~\ref{predual}). 
The set $\Oc_*$ is just a ``copy'' of $\Oc$ contained in the Lie algebra~$\mg$ of 
the infinite-dimensional Lie group $M$ and is the image of $\Oc$ 
by a certain mapping $\theta$ 
defined in terms of a magnetic potential $A\in\Omega^1(G)$. 
In the case $G=(\RR^n,+)$, the mapping $\theta$ is $(x,\xi)\mapsto(\xi+ A(x),x)$. 

In the general case, if two magnetic potentials give rise to the same magnetic field, 
then the corresponding copies of $\Oc$ in the Lie algebra $\mg$ 
are moved to each other by the adjoint action of the Lie group~$M$. 
This leads to the \emph{gauge covariance} of the pseudo-differential calculus 
which we are going to attach to the copy~$\Oc_*$ by the formula 
$$\Op^A(a)f=\int\limits_{\Oc_*}\check{a}(v)\pi(\exp_M v)f\,\de\mu(v)$$
for suitable symbols $a\colon\Oc\to\CC$ and functions $f\colon G\to\CC$. 
(It will be actually convenient  to work with the above integral 
after the change of variables $v=\theta(x,\xi)$ with $(x,\xi)\in T^*G$; 
compare \eqref{weyl_loc_special}~and~\eqref{weyl_loc}.)
Here $\mu$ is the Liouville measure corresponding to the symplectic structure 
on the magnetic predual $\Oc_*\subseteq\mg$ and $\pi$ is a natural irreducible unitary representation 
of the infinite-dimensional Lie group~$M$ on $L^2(G)$ which 
corresponds to the coadjoint orbit~$\Oc$ as in the orbit method (\cite{Ki62}, \cite{Ki76}). 
We show in Theorem~\ref{main1} that the magnetic pseudo-differential Weyl calculus on a nilpotent Lie group $G$ 
possesses appropriate versions of the basic properties 
pointed out in the abelian case $G=(\RR^n,+)$ in \cite{MP04}, 
however the proofs in the present situation 
are considerably more difficult and require 
proving properties of the nilpotent Lie algebras which 
may also have an independent interest (see for instance Proposition~\ref{nilp2}). 
We mention that when $G=(\RR^n,+)$, if $\Fc$ is the $(n+1)$-dimensional vector space of affine functions then one recovers 
the classical Weyl calculus for pseudo-differential operators, 
while for $\Fc=\Cpol(\RR^n)$ the magnetic Weyl calculus of \cite{MP04} is recovered. 

It is noteworthy that, just as in the abelian case, there exists a magnetic Moyal product $\#^A$ 
on the Schwartz space $\Sc(\Oc)$, and ---as a consequence of the gauge covariance--- 
the isomorphism class 
of the associative Fr\'echet algebra $(\Sc(\Oc),\#^A)$ depends only 
on the magnetic field $B=dA\in\Omega^2(G)$.  
Our Theorem~\ref{2step_th} records an explicit formula for $\#^A$ in the case when $G$ is a two-step nilpotent Lie group, 
which extends the corresponding formula established in \cite{MP04} and \cite{KO04} 
and already covers the important situation of the Heisenberg groups. 
We postpone to forthcoming papers both 
the formula for magnetic Moyal product in the case of a general (simply connected) nilpotent Lie group 
and the description and applications of more general classes of symbols 
for the magnetic pseudo-differential Weyl calculus. 
We aim to apply these techniques to more general function spaces~$\Fc$ 
in order to obtain more general radiation conditions for various 
Hamiltonian operators appearing in mathematical physics 
(see for instance \cite{Be01a} and \cite{Be01b}). 

\subsection*{Notation}
Throughout the paper we denote by $\Sc(\Vc)$ the Schwartz space 
on a finite-dimensional real vector space~$\Vc$. 
That is, $\Sc(\Vc)$ is the set of all smooth functions 
that decay faster than any polynomial together with 
their partial derivatives of arbitrary order. 
Its topological dual ---the space of tempered distributions on $\Vc$--- 
is denoted by $\Sc'(\Vc)$. 
We use the notation $\Cpol(\Vc)$ for the space 
of smooth functions that grow polynomially together with 
their partial derivatives of arbitrary order. 
We use $\scalar{\cdot}{\cdot}$ to denote any duality pairing between 
finite-dimensional real vector space whose meaning is clear 
from the context. 
In particular, this may stand for the self-duality given 
a symplectic bilinear form. 

\section{Semidirect products}\label{sect1}

\subsection{One-parameter subgroups in topological groups}

\begin{definition}\label{top1}
\normalfont
For an arbitrary topological group $G$ we define 
$$\Lie(G)=\{X\colon{\mathbb R}\to G\mid X\text{ homomorphism of topological groups}\} $$
and endow this set with the topology of uniform convergence on compact intervals in ${\mathbb R}$. 
The \emph{adjoint action} of $G$ on $\Lie(G)$ is the continuous mapping
$$\Ad\colon G\times\Lie(G)\to\Lie(G),\quad (g,X)\mapsto \Ad(g)X:=gX(\cdot)g^{-1}. $$
The \emph{exponential function} of $G$ is the continuous mapping 
$$\exp_G\colon\Lie(G)\to G,\quad X\mapsto\exp_G X:=X(1).$$
If $H$ is another topological group, then every homomorphism of topological groups $\psi\colon G\to H$ 
induces a continuous mapping 
$\Lie(\psi)\colon\Lie(G)\to\Lie(H),\quad X\mapsto \psi\circ X$ 
and it is easy to see that the diagram 
\begin{equation}\label{nat}
\begin{CD}
\Lie(G) @>{\Lie(\psi)}>> \Lie(H) \\
@V{\exp_G}VV @VV{\exp_H}V \\
G @>{\psi}>> H
\end{CD}
\end{equation}
is commutative. 
In fact $\Lie(\cdot)$ is a functor from the category of topological groups 
to the category of topological spaces, 
and $\exp$ is a natural transformation. 
We refer to \cite{HM05} and Chapter~II in \cite{HM07} 
for these concepts and related results. 
\qed
\end{definition}

\begin{remark}\label{top2}
\normalfont
If $G$ is a finite-dimensional Lie group, then every one-parameter group $X\in\Lie(G)$ 
is actually smooth and there exists a bijective map 
$$\Lie(G)\simeq T_{\1}G$$
which takes every one-parameter subgroup $X\in\Lie(G)$ into 
its infinitesimal generator $\dot{X}(0)\in T_{\1}G$. 
More generally, this assertion holds if $G$ is a locally exponential Lie group 
(modeled on a locally convex space); 
see Def.~II.5.1, Def.~IV.1.1, and Th.~IV.1.18 in \cite{Ne06}. 
\qed
\end{remark}

\begin{remark}\label{top3}
\normalfont
Let $G$ be a topological group and $\Yc$ a complex Banach space.  
We denote 
$$\Cc(\Yc)=\{T\colon\Dc(T)\subseteq\Yc\to\Yc\mid T\text{ closed, densely defined, linear operator}\}.$$
If $\pi\colon G\to\Bc(\Yc)$ is a \so-continuous representation 
which is uniformly bounded 
(that is, $\sup\limits_{g\in G}\Vert\pi(g)\Vert<\infty$), then for every $X\in\Lie(G)$ 
we get a bounded, \so-continuous one-parameter group $\pi\circ X\colon{\mathbb R}\to\Bc(\Yc)$. 
Thus we can define a mapping 
$$\Lie(\pi)\colon\Lie(G)\to\Cc(\Yc),\quad 
X\mapsto\frac{\de}{\de t}\Big\vert_{t=0}\pi(X(t)) $$
by means of the Hille-Yosida theorem, and we have 
\begin{equation}\label{nat_var}
(\forall X\in\Lie(G))\quad  \pi(\exp_G X)=\exp(\Lie(\pi)X)
\end{equation}
(which should be compared with \eqref{nat}). 
Now assume that 
$\Vc$ is a linear subspace of $\Yc$ 
and for every $X\in\Lie(G)$ we have $\Vc\subseteq\Dc(\Lie(\pi)X)$ and 
$(\Lie(\pi)X)\Vc\subseteq\Vc$. 
If moreover $G$ is a topological group with Lie algebra 
in the sense of Chapter~II in \cite{HM07}, 
then it follows by the Trotter formulas that $\Lie(\pi)$ 
induces a representation of the Lie algebra $\Lie(G)$ 
by linear maps on~$\Vc$. 
\qed
\end{remark}

\subsection{Semidirect products and their exponential maps}

\begin{definition}\label{semidirect_def}
\normalfont
Let $G$ be a topological group and $\Fc$ a real topological  vector space 
with the unital associative algebra of continuous endomorphisms denoted by $\End(\Fc)$. 
Assume that $\alpha\colon G\to\End(\Fc)$, $g\mapsto\alpha_g$, 
 is a continuous representation of $G$ on $\Fc$, that is, 
 $\alpha_{\1}=\id_{\Fc}$, $\alpha_{g_1g_2}=\alpha_{g_1}\alpha_{g_2}$ for all $g_1,g_2\in G$, 
 and the mapping 
 $G\times\Fc\to\Fc$, $(g,\phi)\mapsto\alpha_g\phi$
 is continuous. 
 Then the \emph{semidirect product of groups} denoted 
 $\Fc\rtimes_{\alpha}G$ (or $G\ltimes_{\alpha}\Fc$) 
 is the topological group whose underlying topological space is 
 $\Fc\times G$ (respectively $\Fc\rtimes G$) 
 with the multiplication 
\begin{equation}\label{mult}
 (\phi_1,g_1)(\phi_2,g_2)=(\phi_1+\alpha_{g_1}\phi_2,g_1g_2)
\end{equation}
 (respectively $(g_1,\phi_1)(g_2,\phi_2)=(g_1g_2,\phi_1+\alpha_{g_1}\phi_2)$) 
 whenever $g_1,g_2\in G$ and $\phi_1,\phi_2\in\Fc$. 
 It is easy to see that $(0,\1)$ is the unit element in the group $\Fc\times_\alpha G$, 
 while the inversion is given by 
\begin{equation}\label{inv}
 (\phi,g)^{-1}=(-\alpha_{g^{-1}}\phi,g^{-1})
\end{equation}
 for every $\phi\in\Fc$ and $g\in G$. 
 
 Now let $\gg$ be any real topological Lie algebra and assume that 
 $\dot{\alpha}\colon\gg\to\End(\Fc)$, $X\mapsto\dot{\alpha}(X)$,
  is a continuous representation of $\gg$ on $\Fc$, 
 that is, $\dot{\alpha}$ is a linear mapping such that 
 $\dot{\alpha}([X_1,X_2])=[\dot{\alpha}(X_1),\dot{\alpha}(X_2)]
 :=\dot{\alpha}(X_1)\dot{\alpha}(X_2)-\dot{\alpha}(X_2)\dot{\alpha}(X_1)$ 
 for all $X_1,X_2\in\gg$ and the mapping 
 $\gg\times\Fc\to\Fc$, $(X,\phi)\mapsto\dot{\alpha}(X)\phi$ 
 is continuous. 
 Then the \emph{semidirect product of Lie algebras} denoted $\Fc\rtimes_{\dot{\alpha}}\gg$ 
 is the topological Lie algebra whose underlying topological vector space is 
 $\Fc\times\gg$ with the bracket 
 \begin{equation}\label{semidirect_bracket}
 [(\phi_1,X_1),(\phi_2,X_2)]=(\dot{\alpha}(X_1)\phi_2-\dot{\alpha}(X_2)\phi_1,[X_1,X_2])
 \end{equation}
 for every $X_1,X_2\in\gg$ and $\phi_1,\phi_2\in\Fc$. 
 One can similarly define the semidirect product of Lie algebras $\gg\ltimes_{\dot{\alpha}}\Fc$. 
\qed
\end{definition}

\begin{remark}\label{semidirect_rem}
\normalfont
In the setting of Definition~\ref{semidirect_def}, 
if $G$ is a locally convex Lie group (see \cite{Ne06}), 
$\Fc$ is a complete locally convex vector space and 
the mapping $G\times\Fc$, $(g,\phi)\mapsto\alpha_g\phi$ is smooth, 
then it is straightforward to prove the following assertions: 
\begin{enumerate}
\item\label{semidirect_rem_item1}
The semidirect product $M:=\Fc\rtimes_{\alpha}G$ is 
a locally convex Lie group whose Lie algebra is 
$\mg:=\Fc\rtimes_{\dot{\alpha}}\gg$, where $\gg=\Lie(G)$ is the Lie algebra of $G$ 
and $\dot\alpha\colon\gg\to\End(\Fc)$ is defined by the condition that 
for every $\phi\in\Fc$ the linear mapping $\gg\to\Fc$, $X\mapsto\dot{\alpha}(X)\phi$ 
is the differential of the smooth mapping $G\to\Fc$, $g\mapsto\alpha_g\phi$ at 
the point~$\1\in G$. 
\item\label{semidirect_rem_item2}
The adjoint action of the Lie group $M$ 
on its Lie algebra $\mg$ is given by 
$$\Ad_M\colon M\times\mg\to\mg,\quad
(\Ad_M(\phi,g))(\psi,X)=(\alpha_g\psi-\dot{\alpha}(\Ad_G(g)X)\phi,\Ad_G(g)X)$$
for $(\phi,g)\in\Fc\rtimes_{\alpha}G=M$ and $(\psi,X)\in\Fc\rtimes_{\dot{\alpha}}\gg=\mg$. 
\item\label{semidirect_rem_item3}
The coadjoint action of the Lie group $M$ 
on the dual of its Lie algebra $\mg^*=\Fc^*\times\gg^*$ is given by
$$\Ad_M^*\colon M\times\mg^*\to\mg^*,\quad 
(\Ad_M^*(\phi,g))(\nu,\xi)=(\alpha_{g^{-1}}^*\nu,\Ad_G^*(g)\xi+\dot{\alpha}_\phi^*\alpha_{g^{-1}}^*\nu)
$$
for $(\phi,g)\in\Fc\rtimes_{\alpha}G=M$ and  $(\nu,\xi)\in\Fc^*\times\gg^*=\mg^*$, 
where $\alpha_\phi^*\colon\Fc^*\to\gg^*$ is the dual of the linear mapping 
$\alpha_\phi:=\dot{\alpha}(\cdot)\phi\colon\gg\to\Fc$ (see item~\eqref{semidirect_rem_item1} above). 
\end{enumerate}
\qed
\end{remark}

\begin{example}\label{top3.5}
\normalfont
Let $n\ge1$ and assume that $\Fc$ is a linear subspace of 
the space of real Borel functions $\Bc_{\mathbb R}({\mathbb R}^n)$ 
which is invariant under translations and 
is endowed with a linear topology such that 
the mapping 
\begin{equation*}%\label{action}
{\mathbb R}^n\times\Fc\to \Fc,\quad (q,f)\mapsto \alpha(q)f:=f(q+\cdot)
\end{equation*}
is continuous. 
If we denote by $\alpha$ the corresponding action of the additive group $({\mathbb R}^n,+)$ 
by endomorphisms of the group $(\Fc,+)$, then we can construct the semi-direct product 
$$G:=\Fc\rtimes_\alpha{\mathbb R}^n,$$
which is a topological group with the multiplication defined by~\eqref{mult}. 
Moreover, $G$ has a natural unitary representation on the Hilbert space $\Hc:=L^2(\RR^n)$, 
defined by 
\begin{equation}\label{unirep}
\pi\colon G\to\Bc(\Hc),\quad \pi(f,q)\phi=\ee^{\ie f}\phi(q+\cdot)
\text{ whenever }\varphi\in\Hc,\ f\in\Fc,\text{ and }q\in{\mathbb R}^n. 
\end{equation}
If the topology of the function space $\Fc$ is stronger than the topology of pointwise convergence, 
then it follows by Lebesgue's dominated convergence theorem 
that the representation $\pi$ is \so-continuous. 

Here are some special cases of this construction: 
\begin{enumerate}
\item 
For any integer $k\ge1$ let us consider 
the following space of polynomial functions on ${\mathbb R}^n$
$$\Pc_k({\mathbb R}^n)=\{f\in{\mathbb R}[q_1,\dots,q_n]\mid\deg f\le k\}.$$
The linear space $\Pc_k({\mathbb R}^n)$ is finite-dimensional and 
is invariant under translations, hence we can form 
the semi-direct product 
$G_k:=\Pc_k({\mathbb R}^n)\rtimes_{\alpha}{\mathbb R}^n$, 
which is a finite-dimensional, nilpotent, simply connected Lie group. 
The special case $k=1$ of this construction is particularly important, 
since $G_1$ is precisely the $(2n+1)$-dimensional Heisenberg group. 
\item
If $\Fc=\Ci_{\mathbb R}({\mathbb R}^n)$ with the natural Fr\'echet topology, 
then it follows by Ex.~II.5.9 in \cite{Ne06} that 
$G=\Ci_{\mathbb R}({\mathbb R}^n)\rtimes_\alpha{\mathbb R}^n$ is a (Fr\'echet-)Lie group 
whose Lie algebra is the semi-direct product
$$\gg=\Ci_{\mathbb R}({\mathbb R}^n)\rtimes_{\dot{\alpha}}{\mathbb R}^n,$$
where 
$$\dot{\alpha}\colon{\mathbb R}^n\to\Der(\Ci_{\mathbb R}({\mathbb R}^n)),\quad 
(p_1,\dots,p_n)\mapsto p_1\frac{\partial}{\partial q_1}+\cdots
+p_n\frac{\partial}{\partial q_n}.$$
The Lie algebra $\gg$ fails to be abelian or even nilpotent, however it is solvable 
since $[\gg,\gg]=\Ci_{\mathbb R}({\mathbb R}^n)$, hence $[[\gg,\gg],[\gg,\gg]]=\{0\}$. 
As regards the finite-dimensional Lie groups 
$G_k:=\Pc_k({\mathbb R}^n)\rtimes_{\alpha}{\mathbb R}^n$ for $k\ge1$, 
we also note that 
$G_1\subset G_2\subset\cdots\subset\overline{\bigcup\limits_{k\ge1}G_k}=G$.
\end{enumerate}
\qed
\end{example}

The following statement partially extends 
Th.~49.6 and remark~38.9 in \cite{KM97} and 
some facts noted in Ex.~II.5.9 in \cite{Ne06}. 
See also Sect.~3 in \cite{MS03} for the expression of the exponential map 
for a semi-direct product of finite-dimensional Lie groups. 

\begin{proposition}\label{top3.6}
Let $G$ be a topological group acting on a topological space $D$ 
by an action denoted simply by 
$$G\times D\to D, \quad (g,x)\mapsto g.x$$ 
and assume that $\Fc$ is a linear subspace of 
the space of real Borel functions $\Bc_{\mathbb R}(D)$ 
which is invariant under the translation operators 
$\alpha_g\colon\Bc_{\mathbb R}(D)\to\Bc_{\mathbb R}(D)$ defined by  
$(\alpha_g\phi)(x)=\phi(g^{-1}.x)$ for $g\in G$, $x\in D$, and 
$\phi\in\Bc_{\mathbb R}(D)$. 
Also assume that $\Fc$ 
is endowed with a complete, locally convex topology such that 
the mapping 
\begin{equation}\label{action}
G\times\Fc\to \Fc,\quad (g,\phi)\mapsto \alpha_g\phi
\end{equation}
is continuous. 
Then the following assertions hold: 
\begin{enumerate}
\item\label{top3.6_item1}
The mapping 
\begin{equation}\label{beta}
\Fc\times\Lie(G)\to \Fc,\quad 
(\phi,X)\mapsto \beta(X)\phi:=\int\limits_0^1\alpha_{X(s)}\phi\,\de s
\end{equation}
is well defined and continuous. 
\item\label{top3.6_item2} 
For every pair $(\phi,X)\in\Fc\times\Lie(G)$, the function 
$$Z_{\phi,X}\colon{\mathbb R}\to\Fc\times G,\quad 
Z_{\phi,X}(t)=(t\beta(tX)\phi,X(t))
=\Bigl(\int\limits_0^t\alpha_{X(s)}\phi\,\de s,X(t)\Bigr)  $$
has the property $Z_{\phi,X}\in\Lie(\Fc\rtimes_\alpha G)$. 
Moreover, $t\mapsto t\beta(tX)\phi$ is a differentiable curve in $\Fc$ 
and $\frac{\de}{\de t}\Big\vert_{t=0}(t\beta(tX)\phi)=\phi$. 
\item\label{top3.6_item2.5}
Let $\psi\in\Fc$ and $X\in\Lie(G)$ such that 
the curve ${\mathbb R}\to\Fc$, $t\mapsto\alpha_{X(t)}\psi$ is differentiable, 
and denote 
$\dot{\alpha}(X)\psi:=\frac{\de}{\de t}\Big\vert_{t=0}\alpha_{X(t)}\psi\in\Fc$. 
Then 
$$(\forall \phi\in\Fc)\quad 
\bigl(\Ad_{\Fc\rtimes_\alpha G}\psi\bigr)Z_{\phi,X}
=Z_{\phi-\dot{\alpha}(X)\psi,X}\in\Lie(\Fc\rtimes_{\alpha}G).$$
\item\label{top3.6_item3}
If we assume that $G$ is a finite-dimensional Lie group acting on itself 
by left translations (hence $D=G$ and $(\alpha_g\phi)(x)=(\lambda_g\phi)(x)=\lambda(g^{-1}x)$ 
for $g,x\in G$ and $\phi\in\Fc$) and 
there exists the continuous inclusion 
$\Fc\hookrightarrow\Ci(G)$ such that 
the mapping \eqref{action} is smooth, then 
$\Fc\rtimes_\lambda G$ is a locally convex Lie group 
with the following properties: 
\begin{enumerate}
\item\label{top3.6_item3_item-a}
The Lie algebra of $\Fc\rtimes_\lambda G$ is 
the semi-direct product of Lie algebras 
$\Fc\rtimes_{\dot{\lambda}}\gg$, 
where $\gg:=\Lie(G)$, $\Fc$ is thought of as an abelian Lie algebra 
and the mapping 
$\dot{\lambda}\colon\gg\to\Der(\Fc)$ is defined 
as in~\eqref{top3.6_item2.5} above. 
(That is, $\dot{\lambda}$ is induced by the natural representation 
of the elements in $\gg$ as right-invariant vector fields on $G$.)
\item\label{top3.6_item3_item-b} 
The exponential map of the Lie group $\Fc\rtimes_\lambda G$ 
is defined by the formula 
$$\exp_{\Fc\rtimes_\lambda G}\colon
\Fc\rtimes_{\dot{\lambda}}\gg\to 
\Fc\rtimes_\lambda G,\quad 
(\phi,X)\mapsto(\beta(X)\phi,\exp_G X).$$
\item\label{top3.6_item3_item-c}
Assume $G=({\mathbb R}^n,+)$ with the generic point denoted by 
$(q_1,\dots,q_n)$.  If 
$A_j,A_j,\psi\in\Fc$ and $j\in\{1,\dots,n\}$ satisfy 
$A_j'=A_j+{\partial\psi}/{\partial q_j}$, 
then 
$$\bigl(\Ad(\exp_{\Fc\rtimes_\lambda{\mathbb R}^n}\psi)\bigr)(A'_j,p_j)
=(A_j,p_j)
\in\Fc\rtimes_{\dot{\lambda}}{\mathbb R}^n.$$
\end{enumerate}
\end{enumerate}
\end{proposition}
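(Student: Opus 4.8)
The plan is to prove the five assertions more or less in the order stated, since each one builds on the previous.

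For item~\eqref{top3.6_item1}, the point is that for fixed $\phi\in\Fc$ the map $s\mapsto\alpha_{X(s)}\phi$ is a continuous curve $[0,1]\to\Fc$ (this uses continuity of~\eqref{action} and the fact that $s\mapsto X(s)$ is continuous into $G$), so the $\Fc$-valued Riemann integral $\int_0^1\alpha_{X(s)}\phi\,\de s$ exists because $\Fc$ is complete locally convex. For joint continuity of $(\phi,X)\mapsto\beta(X)\phi$ I would estimate, for a continuous seminorm $p$ on $\Fc$, the quantity $p\bigl(\beta(X)\phi-\beta(X_0)\phi_0\bigr)$ by splitting into $p(\beta(X)(\phi-\phi_0))$ and $p((\beta(X)-\beta(X_0))\phi_0)$; the first is controlled uniformly in $X$ over a neighborhood using equicontinuity of $\{\alpha_{g}:g\in K\}$ for $K$ compact, and the second by uniform continuity of $s\mapsto\alpha_{X(s)}\phi_0$ together with the fact that $X\mapsto X(\cdot)$ converges uniformly on $[0,1]$.

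For item~\eqref{top3.6_item2} I would verify directly that $Z_{\phi,X}$ is a homomorphism $\RR\to\Fc\rtimes_\alpha G$: using the multiplication~\eqref{mult}, $Z_{\phi,X}(t)Z_{\phi,X}(t')=\bigl(\int_0^t\alpha_{X(s)}\phi\,\de s+\alpha_{X(t)}\int_0^{t'}\alpha_{X(s)}\phi\,\de s,\ X(t+t')\bigr)$, and the change of variables $s\mapsto s-t$ in the second integral (using $\alpha_{X(t)}\alpha_{X(s)}=\alpha_{X(t+s)}$) turns the sum into $\int_0^{t+t'}\alpha_{X(s)}\phi\,\de s$, so it equals $Z_{\phi,X}(t+t')$; continuity in $t$ follows from item~\eqref{top3.6_item1} and continuity of $X$. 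The differentiability statement is the fundamental theorem of calculus for the $\Fc$-valued curve $t\mapsto\int_0^t\alpha_{X(s)}\phi\,\de s$, whose derivative at~$0$ is $\alpha_{X(0)}\phi=\alpha_{\1}\phi=\phi$.

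For item~\eqref{top3.6_item2.5} I would compute $\bigl(\Ad_{\Fc\rtimes_\alpha G}\psi\bigr)Z_{\phi,X}$ using the formula $\Ad(g)Z=g\,Z(\cdot)\,g^{-1}$ from Definition~\ref{top1} with $g=(\psi,\1)$, carrying out the two group multiplications via~\eqref{mult} and~\eqref{inv}; the $G$-component is unchanged and the $\Fc$-component becomes $\psi+\int_0^t\alpha_{X(s)}\phi\,\de s-\alpha_{X(t)}\psi$, so I must show $\psi-\alpha_{X(t)}\psi=-\int_0^t\alpha_{X(s)}\dot\alpha(X)\psi\,\de s$, equivalently that $t\mapsto\alpha_{X(t)}\psi$ solves the integral equation with derivative $\alpha_{X(t)}\dot\alpha(X)\psi$; this follows from the cocycle identity $\alpha_{X(t+h)}\psi=\alpha_{X(t)}\alpha_{X(h)}\psi$, which upon differentiating in~$h$ at $h=0$ gives exactly $\frac{\de}{\de t}\alpha_{X(t)}\psi=\alpha_{X(t)}\dot\alpha(X)\psi$, and integrating recovers the claim, identifying the result with $Z_{\phi-\dot\alpha(X)\psi,X}$. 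For item~\eqref{top3.6_item3}: part~(a) is Remark~\ref{semidirect_rem}\eqref{semidirect_rem_item1} applied to $\alpha=\lambda$, once one notes $\Fc\hookrightarrow\Ci(G)$ makes the action smooth so that $\dot\lambda(X)$ is the right-invariant vector field associated to $X$; part~(b) identifies the abstract exponential map with $\exp_G$ on the $G$-factor (functoriality, diagram~\eqref{nat}, applied to the projection $M\to G$) and with $Z_{\phi,X}(1)=\beta(X)\phi$ on the $\Fc$-factor, using that $\Fc\rtimes_\lambda G$ is locally exponential so $\Lie(M)\simeq\mg$ via infinitesimal generators and the one-parameter group through $(\phi,X)$ is exactly $Z_{\phi,X}$ by item~\eqref{top3.6_item2}; part~(c) is the special case $G=(\RR^n,+)$, $X=p_je_j$ of item~\eqref{top3.6_item2.5} combined with part~(b): there $\dot\lambda(p_je_j)\psi=p_j\partial\psi/\partial q_j$ and $\exp_{\Fc\rtimes_\lambda\RR^n}\psi$ means the one-parameter object through $(\psi,0)$, so $\Ad(\exp\psi)(A_j',p_j)=(A_j'-p_j\partial\psi/\partial q_j,p_j)=(A_j,p_j)$ by hypothesis $A_j'=A_j+\partial\psi/\partial q_j$. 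The main obstacle I expect is the joint-continuity argument in item~\eqref{top3.6_item1} and making the $\Fc$-valued differentiation in items~\eqref{top3.6_item2} and~\eqref{top3.6_item2.5} rigorous — one must be careful that differentiability of $t\mapsto\alpha_{X(t)}\psi$ is only \emph{assumed} in~\eqref{top3.6_item2.5}, not automatic, since $\Fc$ need not be locally exponential, so the cocycle-differentiation step has to be phrased as: differentiability at $t=0$ plus the cocycle identity forces differentiability everywhere with the stated derivative.
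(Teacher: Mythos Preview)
Your proposal is correct and follows essentially the same route as the paper's proof: existence of the Riemann integral via completeness for~\eqref{top3.6_item1}, direct verification of the homomorphism property of $Z_{\phi,X}$ and the fundamental theorem of calculus for~\eqref{top3.6_item2}, the conjugation computation reduced to $\psi-\alpha_{X(t)}\psi=-\int_0^t\alpha_{X(s)}\dot\alpha(X)\psi\,\de s$ via the cocycle identity for~\eqref{top3.6_item2.5}, and specialization for~\eqref{top3.6_item3}. Your treatment of joint continuity in~\eqref{top3.6_item1} is in fact more explicit than the paper's (which simply appeals to ``continuity properties of the Riemann integral''), and your careful remark that differentiability of $t\mapsto\alpha_{X(t)}\psi$ is only \emph{assumed} in~\eqref{top3.6_item2.5} is well taken; the only slip is a stray factor of $p_j$ in your computation for~\eqref{top3.6_item3_item-c}, which should disappear once $p_j$ is read as the $j$-th basis vector in $\RR^n$ rather than a scalar.
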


\begin{proof}
\eqref{top3.6_item1}
For every $(\phi,X)\in\Fc\times G$ the function 
$[0,1]\to\Fc$, $s\mapsto\alpha_{X(s)}\phi$ 
is Riemann integrable since it is continuous and 
the locally convex space $\Fc$ is complete; 
see for instance Lemma~2.5 in Ch.~I of \cite{KM97}. 
The continuity of the mapping $(\phi,X)\mapsto\beta(X)\phi$ follows 
by the continuity of~\eqref{action} and 
the continuity properties of the Riemann integral. 

\eqref{top3.6_item2}
The second equality in the definition of $Z_{\phi,X}(t)$ follows by 
a change of variables in the Riemann integral 
(Corollary~2.6(3) in Ch.~I of \cite{KM97}), 
and $Z_{\phi,X}\colon{\mathbb R}\to\Fc\times G$ is continuous 
by the previous assertion~\eqref{top3.6_item1}. 
Moreover, for arbitrary $t_1,t_2\in{\mathbb R}$ we have 
$$\begin{aligned}
Z_{\phi,X}(t_1)Z_{\phi,X}(t_2)
&=(t_1\beta(t_1X)\phi,X(t_1))(t_2\beta(t_2X)\phi,X(t_2)) \\
&=(t_1\beta(t_1p)\phi+\alpha_{X(t_1)}(t_2\beta(t_2X)\phi),X(t_1)X(t_2)) \\
&=(t_1\beta(t_1X)\phi+t_2\alpha_{X(t_1)}\beta(t_2X)\phi,X(t_1+t_2)). 
  \end{aligned}
$$
On the other hand, 
$$\begin{aligned}
t_1\beta(t_1X)\phi+t_2\alpha_{X(t_1)}\beta(t_2X)\phi
&=t_1\int\limits_0^1\alpha_{X(t_1s)}\phi\,\de s
 +t_2\alpha_{X(t_1)}\int\limits_0^1\alpha_{X(t_2s)}\phi\,\de s \\
&=\int\limits_0^{t_1}\alpha_{X(s)}\phi\,\de s
 +\alpha_{X(t_1)}\int\limits_0^{t_2}\alpha_{X(s)}\phi\,\de s \\
&= \int\limits_0^{t_1}\alpha_{X(s)}\phi\,\de s
 +\int\limits_0^{t_2}\alpha_{X(t_1+s)}\phi\,\de s \\
&=\int\limits_0^{t_1}\alpha_{X(s)}\phi\,\de s
 +\int\limits_{t_1}^{t_1+t_2}\alpha_{X(s)}\phi\,\de s \\
&=\int\limits_0^{t_1+t_2}\alpha_{X(s)}\phi\,\de s \\
&=(t_1+t_2)\beta((t_1+t_2)X)\phi
  \end{aligned}
$$
and it follows that $Z_{\phi,X}(t_1)Z_{\phi,X}(t_2)=Z_{\phi,X}(t_1+t_2)$. 
Thus $Z_{\phi,X}\in\Lie(\Fc\rtimes_\alpha G)$.

The equality $\frac{\de}{\de t}\Big\vert_{t=0}(t\beta(tX)\phi)=\phi$ 
follows by 
Lemma~2.5 in Ch.~I of \cite{KM97} again. 

\eqref{top3.6_item2.5} 
Firstly note that 
$\alpha_{X(t_1+t_2)}=\alpha_{X(t_1)}\alpha_{X(t_2)}$ 
for every $t_1,t_2\in{\mathbb R}$, hence we have 
\begin{equation}\label{diff}
(\forall s\in{\mathbb R})\quad 
\frac{\de}{\de t}\Big\vert_{t=s}\alpha_{X(t)}\psi
=\alpha_{X(s)}\dot{\alpha}(X)\psi. 
\end{equation}
On the other hand, it follows by \eqref{mult}~and~\eqref{inv} that 
$(\psi,\1)^{-1}=(-\psi,\1)$ 
and $(\psi,\1)(\phi,g)(\psi,\1)^{-1}=(\psi+\phi-\alpha_g\psi,g)$ 
whenever $\phi\in\Fc$ and $g\in G$. 
Therefore for arbitrary $\phi\in\Fc$ and $t\in{\mathbb R}$ we get 
$$\begin{aligned}
((\Ad_{\Fc\rtimes_\alpha G}\psi)Z_{\phi,X})(t) 
 &= (\psi,\1)Z_{\phi,X}(t)(\psi,\1)^{-1} 
  = (\psi,\1) \Bigl(\int\limits_0^t\alpha_{X(s)}\phi\,\de s,X(t)\Bigr) 
     (\psi,\1)^{-1} \\
 &= \Bigl(\psi+\int\limits_0^t\alpha_{X(s)}\phi\,\de s-\alpha_{X(t)}\psi,
     X(t)\Bigr) \\
 &= \Bigl(\int\limits_0^t\alpha_{X(s)}\phi\,\de s
  -\int\limits_0^t\frac{\de}{\de r}\Big\vert_{r=s}(\alpha_{X(r)}\psi)\,\de s, 
   X(t)\Bigr) \\
 &= \Bigl(\int\limits_0^t\alpha_{X(s)}(\phi-\dot{\alpha}(X)\psi)\,\de s, X(t)\Bigr) \\
 &= Z_{\phi-\dot{\alpha}(X)\psi,X}(t), 
  \end{aligned}
$$
where the next-to-last equality follows by~\eqref{diff}. 

\eqref{top3.6_item3} 
Let us denote $M=\Fc\rtimes_\lambda G$ and $\mg=\Fc\rtimes_{\dot\lambda}\gg$. 
It is clear that $\mg= T_{(0,\1)}S$ so in order to prove that 
$\mg=\Lie(S)$, we still have to check that the operations 
of sum and bracket in these spaces agree. 
The latter fact follows since for every $(\phi,X)\in\mg$ 
and every $t\in{\mathbb R}$ we have 
$\exp_M(t(\phi,X))=Z_{\phi,X}(t)$ by means of the above 
assertion~\eqref{top3.6_item2}. 
This shows that \eqref{top3.6_item3_item-a}-\eqref{top3.6_item3_item-b} 
hold. 
The remaining property \eqref{top3.6_item3_item-c}  follows by assertion~\eqref{top3.6_item2.5}. 
\end{proof}

\subsection{Coadjoint orbits of semidirect products}

The symplectic structures on coadjoint orbits of semidirect products 
defined by \emph{finite-dimensional} representations of Lie groups
were thoroughly investigated in \cite{Ba98}.
As we are interested in semidirect product $M=\Fc\rtimes_\lambda G$, where
$\lambda\colon G\to \End(\Fc)$ is a representation on a function space 
$\Fc$, which is in general infinite dimensional,  in this subsection we shall study a coadjoint orbit $\Oc$ of $M$ that is not covered by the results in the 
of \cite{Ba98}.
This orbit will play a central role in our construction of magnetic pseudo-differential operators.

\begin{definition}\label{orbit0}
\normalfont
Let $G$ be a finite-dimensional Lie group and $\Fc$ a linear subspace of $\Bc_{\RR}(G)$ 
endowed with a locally convex topology. 
We say that the function space $\Fc$ is \emph{admissible} 
if it satisfies the following conditions: 
\begin{enumerate}
\item\label{orbit0_item1}
The linear space $\Fc$ is invariant under the representation of $G$ by left translations, 
$$\lambda\colon G\to\End(\Bc_{\RR}(G)),\quad (\lambda_g\phi)(x)=\phi(g^{-1}x).$$
That is, if $\phi\in\Fc$ and $g\in G$ then $\lambda_g\phi\in\Fc$. 
We denote again by $\lambda\colon G\to\End(\Fc)$ the restriction to $\Fc$ 
of the aforementioned representation of~$G$. 
\item\label{orbit0_item2}
We have $\Fc\subseteq\Ci(G)$ and the topology of $\Fc$ is stronger than 
the topology induced from $\Ci(G)$. 
In other words, the inclusion mapping $\Fc\hookrightarrow\Ci(G)$ is continuous. 
\item\label{orbit0_item3}
The mapping $G\times\Fc\to\Fc$, $(g,\phi)\mapsto\lambda_g\phi$ is smooth. 
For every $\phi\in\Fc$ we denote by $\dot{\lambda}(\cdot)\phi\colon\gg\to\Fc$ 
the differential of the mapping $g\mapsto\lambda_g\phi$ at the point $\1\in G$. 
Thus for all $X\in\gg$ and $g\in G$ we have 
\begin{equation}\label{lambda_diff}
\begin{aligned}
(\dot{\lambda}(X)\phi)(g)
&=\frac{\de}{\de t}\Big\vert_{t=0}\phi(\exp_G(-tX)g)
=-(\phi\circ R_g)'_0(X) \\
&=-(\phi'_g\circ (R_g)'_0)(X)
=-\langle ((R_g)'_0)^*(\phi'_g),X\rangle
\end{aligned}
\end{equation}
where $\langle\cdot,\cdot\rangle\colon\gg^*\times\gg\to\RR$ 
is the canonical duality pairing and $R_g\colon G\to G$, $x\mapsto xg$. 
\item\label{orbit0_item4} 
The points in $G$ are separated by the functions in $\Fc$, that is, 
for every $g_1,g_2\in G$ with $g_1\ne g_2$ there exists $\phi\in\Fc$ with $\phi(g_1)\ne\phi(g_2)$. 
\item\label{orbit0_item5} 
We have $\{\phi'_g\mid\phi\in\Fc\}=T_g^*G$ for every $g\in G$. 
\end{enumerate}
It is clear that $\Ci(G)$ itself is admissible. 
\qed
\end{definition}

\begin{proposition}\label{orbit1}
Let $G$ be a finite-dimensional Lie group and $\Fc\hookrightarrow\Ci(G)$ 
an admissible function space on~$G$. 
Denote $M=\Fc\rtimes_\lambda G$, $\mg=\Lie(M)$ and for every $g\in G$ 
let $\delta_g\colon\Fc\to\RR$, $\phi\mapsto\phi(g)$. 
Define
$$\Oc:=\{(\delta_g,\xi)\mid g\in G,\xi\in\gg^*\}\subseteq\Fc^*\times \gg^*=\mg^*.$$ 
Then $\Oc$ is a coadjoint orbit of the locally convex Lie group $M$ 
which has the following properties: 
\begin{enumerate}
\item\label{orbit1_item1}
The orbit $\Oc$ is a smooth finite-dimensional manifold such that 
for every $\mu\in\Oc$ 
the coadjoint action defines a trivial smooth  bundle 
$\Pi_\mu\colon M\to\Oc$, $m\mapsto\Ad_M^*(m)\mu$. 
\item\label{orbit1_item2}
There exists a canonical symplectic form 
$\omega\in\Omega^2(\Oc)$ invariant under the coadjoint action of $M$ on $\Oc$, 
such that for every $\mu\in\Oc$ the pull-back $\Pi_\mu^*(\omega)\in\Omega^2(M)$ 
is a left invariant 2-form on $M$ whose value at $\1\in M$ is the bilinear functional 
$(\Pi_\mu^*(\omega))_{\1}\colon\mg\times\mg\to\RR$, $(X,Y)\mapsto -\mu([X,Y])$. 
\item\label{orbit1_item3}
The symplectic manifold $(\Oc,\omega)$ is 
symplectomorphic to the cotangent bundle $T^*G$ endowed with its canonical symplectic structure. 
\end{enumerate}
\end{proposition}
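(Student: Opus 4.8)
The plan is to realise $\Oc$ as the coadjoint orbit of $M$ through $\mu_0:=(\delta_{\1},0)\in\Fc^*\times\gg^*=\mg^*$ and to read all three statements off the explicit description of the semidirect product. Applying Remark~\ref{semidirect_rem}\eqref{semidirect_rem_item3} with $\alpha=\lambda$, together with the formula~\eqref{lambda_diff} for $\dot\lambda$, gives
$$\Ad_M^*(\phi,g)\mu_0=\bigl(\delta_g,\ -((R_g)'_0)^*(\phi'_g)\bigr)\qquad(\phi\in\Fc,\ g\in G).$$
Since $((R_g)'_0)^*\colon T_g^*G\to\gg^*$ is a linear isomorphism and, by admissibility~\eqref{orbit0_item5}, $\{\phi'_g\mid\phi\in\Fc\}=T_g^*G$, for each fixed $g$ the second coordinate ranges over all of $\gg^*$; letting $g$ vary too shows that the coadjoint orbit of $\mu_0$ equals $\Oc$. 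The separation property~\eqref{orbit0_item4} makes $g\mapsto\delta_g$ injective, so $(\delta_g,\xi)\mapsto(g,\xi)$ is a bijection of $\Oc$ onto $G\times\gg^*$; transporting through any trivialisation $T^*G\cong G\times\gg^*$ we equip $\Oc$ with a finite-dimensional smooth structure, and the displayed formula shows that $\Ad_M^*\colon M\times\Oc\to\Oc$ and each map $\Pi_\mu$ are smooth.

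To complete~\eqref{orbit1_item1} I must trivialise the bundle $\Pi_\mu$. The formula above shows that the stabiliser of $\mu_0$ is the closed abelian subgroup $M_{\mu_0}=\{(\phi,\1)\mid\phi'_{\1}=0\}\cong(\Fc_0,+)$, where $\Fc_0:=\Ker(\Fc\to\gg^*,\ \phi\mapsto\phi'_{\1})$ is a closed subspace of finite codimension (by~\eqref{orbit0_item2}--\eqref{orbit0_item5} the map $\phi\mapsto\phi'_{\1}$ is continuous and onto), hence complemented; fix a continuous linear section $s_{\1}\colon\gg^*\to\Fc$ of it. Since $\lambda_g$ maps $\Fc_0$ onto $\{\eta\in\Fc\mid\eta'_g=0\}$, transporting $s_{\1}$ along $G$ by the translations $\lambda_g$ yields a smooth family of continuous linear sections $s_g$ of $\phi\mapsto\phi'_g$; then the decomposition $\phi=s_g(\phi'_g)+(\phi-s_g(\phi'_g))$, whose second summand lies in $\lambda_g\Fc_0$, provides an explicit diffeomorphism $M\xrightarrow{\ \sim\ }\Oc\times\Fc_0$ intertwining $\Pi_{\mu_0}$ with the first projection. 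For a general $\mu=\Ad_M^*(m_0)\mu_0$ one has $\Pi_\mu=\Pi_{\mu_0}\circ R_{m_0}$ with $R_{m_0}$ a diffeomorphism of $M$, so $\Pi_\mu$ is a trivial bundle as well.

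Assertion~\eqref{orbit1_item2} is the Kirillov--Kostant--Souriau construction performed for the locally convex group $M$. For $\mu\in\Oc$ put $\sigma_\mu\colon\mg\times\mg\to\RR$, $\sigma_\mu(X,Y)=-\mu([X,Y])$; using the bracket~\eqref{semidirect_bracket} one checks that the radical of $\sigma_\mu$ is exactly $\mg_\mu:=\Lie(M_\mu)$, and $\sigma_\mu$ is $\Ad(M_\mu)$-invariant because $\mu$ is fixed by $\Ad_M^*(M_\mu)$. Hence the left invariant $2$-form $\widetilde\omega_\mu$ on $M$ with $(\widetilde\omega_\mu)_{\1}=\sigma_\mu$ is right-$M_\mu$-invariant and annihilates the tangent directions to the fibres of $\Pi_\mu$, so it descends along the (trivial, hence submersive) bundle $\Pi_\mu$ to a $2$-form $\omega$ on $\Oc$ with $\Pi_\mu^*\omega=\widetilde\omega_\mu$. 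Replacing $\mu$ by $\Ad_M^*(m)\mu$ only precomposes $\widetilde\omega_\mu$ with a left translation, so $\omega$ is independent of $\mu$ and invariant under the coadjoint action. It is closed because $\de\widetilde\omega_\mu$ is left invariant and its value at $\1$, a cyclic sum of terms $\mu([[X,Y],Z])$, vanishes by the Jacobi identity, while $\Pi_\mu$ is a submersion; and it is non-degenerate, since at $\mu$ the form it induces on $T_\mu\Oc\cong\mg/\mg_\mu$ is the non-degenerate quotient of $\sigma_\mu$, whence everywhere by $M$-invariance.

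Finally, for~\eqref{orbit1_item3} consider the action of $M=\Fc\rtimes_\lambda G$ on $T^*G$ in which $G$ acts by cotangent lifts of left translations and $\Fc$ acts by fibrewise translation $\eta\mapsto\eta+\de\phi$; the multiplication law~\eqref{mult} shows this is a group action, and it is by symplectomorphisms of $(T^*G,\omega_{T^*G})$ (cotangent lifts and translations by exact $1$-forms preserve the canonical form) and transitive by~\eqref{orbit0_item5}, with orbit map $\pi_2\colon M\to T^*G$, $(\phi,g)\mapsto(\de\phi)_g$, itself a trivial bundle by the argument of the second paragraph. Its fibres coincide with those of $\Pi_{\mu_0}$ under the bijection $\Xi\colon T^*G\to\Oc$, $\eta_g\mapsto(\delta_g,-((R_g)'_0)^*\eta_g)$, so $\Xi$ is a diffeomorphism with $\Xi\circ\pi_2=\Pi_{\mu_0}$, and it is $M$-equivariant. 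As $\Xi^*\omega$ and $\omega_{T^*G}$ are both $M$-invariant, it is enough to check $\Xi^*\omega=\omega_{T^*G}$ at the single point $\pi_2(\1)=0_{\1}$, and since $\pi_2$ is a submersion this amounts to the identity of bilinear forms $\sigma_{\mu_0}=(\pi_2^*\omega_{T^*G})_{\1}$ on $\mg$, which one verifies by a direct computation from the bracket~\eqref{semidirect_bracket}, the formula~\eqref{lambda_diff}, the description of $\exp_M$ in Proposition~\ref{top3.6}\eqref{top3.6_item3_item-b}, and the definition of the canonical $1$-form of $T^*G$. (Alternatively, $\Xi$ is the equivariant moment map of the above action, and a moment map pulls the Kirillov--Kostant--Souriau form of its image orbit back to the ambient symplectic form.) I expect the main obstacle to be precisely this matching of the canonical symplectic form of $T^*G$ with the Kirillov--Kostant--Souriau form of $\Oc$, getting all the signs right, together with the more routine task of making the trivialisation and descent arguments rigorous within the category of locally convex manifolds.
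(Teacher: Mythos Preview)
Your proof is correct and follows essentially the same route as the paper: the explicit coadjoint formula from Remark~\ref{semidirect_rem}\eqref{semidirect_rem_item3} and~\eqref{lambda_diff}, the identification of the isotropy group $M_{\mu_0}=\{\phi'_{\1}=0\}\times\{\1\}$, the trivialisation of $\Pi_{\mu_0}$ via a linear section $\gg^*\to\Fc$ of $\phi\mapsto\phi'_{\1}$ (then conjugated for general $\mu$), the KKS construction, and the one-point check that the KKS form matches the canonical form of $T^*G$ are exactly the ingredients the paper uses.

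The only presentational difference is the order in~\eqref{orbit1_item2}--\eqref{orbit1_item3}: the paper first transports the canonical symplectic form of $T^*G$ to $\Oc$ via the obvious bijection $(g,\xi)\mapsto(\delta_g,\xi)$, then builds the KKS form $\widetilde\omega$ and checks $\omega=\widetilde\omega$ at $\tilde\delta_{\1}$ using the left-trivialised description~\eqref{canonical} of $-d\sigma$; you do it in the reverse order and, for~\eqref{orbit1_item3}, introduce the auxiliary $M$-action on $T^*G$ (cotangent lifts together with fibre translations by exact $1$-forms) so as to exploit $M$-equivariance and reduce to the same one-point comparison. Your moment-map remark is a nice conceptual shortcut the paper does not make explicit, but the underlying computation is identical.
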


\begin{proof}
Denote $\tilde{\delta}_{\1}:=(\delta_{\1},0)\in\Oc$. 
It follows by Remark~\ref{semidirect_rem}\eqref{semidirect_rem_item3} that 
for an arbitrary element $(\phi,g)\in M=\Fc\rtimes_\lambda G$ we have 
\begin{equation}\label{coadj}
\begin{aligned}
\Pi_{\tilde{\delta}_{\1}}(\varphi,g)
=(\Ad_M^*(\phi,g))\tilde{\delta}_{\1}
&=(\lambda_{g^{-1}}^*(\delta_{\1}),\dot{\lambda}_\phi^*(\lambda_{g^{-1}}(\delta_{\1}))) \\
&=(\delta_g,\dot{\lambda}_\phi^*(\delta_g))
=(\delta_g,((R_g)'_0)^*(\phi'_g))\in\Fc^*\times\gg^* 
\end{aligned}
\end{equation}
since, if we denote again by $\langle\cdot,\cdot\rangle\colon\gg^*\times\gg\to\RR$ 
the canonical duality pairing, then 
for every $X\in\gg$ we get 
$$\langle \dot{\lambda}_\phi^*(\delta_g)),X\rangle
=\delta_g(\lambda_\phi(X))
=(\dot{\lambda}(X)\phi)(g)
=-\langle ((R_g)'_0)^*(\phi'_g),X\rangle,$$
where the latter equality follows by~\eqref{lambda_diff}. 
Now note that $((R_g)'_0)^*\colon T_g^*G\to T_{\1}^*G=\gg^*$ is a linear isomorphism, 
hence by \eqref{coadj} and condition \eqref{orbit0_item5} in Definition~\ref{orbit0} 
we get $\{(\Ad_M^*(\phi,g))\tilde{\delta}_{\1}\mid (\phi,g)\in M\}=\Oc$, 
hence the set $\Oc$ is indeed a coadjoint orbit in $\mg^*$. 

We now proceed to proving the other properties of $\Oc$ mentioned in the statement. 
Note that the natural surjective mapping 
\begin{equation}\label{struct}
T^*G\to\Oc,\quad (g,\xi)\mapsto(\delta_g,\xi)
\end{equation}
is also injective since points of $G$ are separated by the functions in $\Fc$ 
(property~\eqref{orbit0_item4} in Definition~\ref{orbit0}). 
We shall endow $\Oc$ with the structure of smooth finite-dimensional manifold 
such that the mapping~\eqref{struct} is a diffeomorphism. 
Let $\omega\in\Omega^2(\Oc)$ be the symplectic form obtained by 
transporting the canonical symplectic form of $T^*G$ by means 
of the diffeomorphism~\eqref{struct}. 

In order to describe $\omega$, 
recall that $T^*G$ is a trivial vector bundle over $G$ with the fiber $\gg^*$ and, by using the left trivialization,  
we may perform the identification $T^*G=G\ltimes_{\Ad_G^*}\gg^*$. 
This makes $T^*G$ into a finite-dimensional Lie group 
whose Lie algebra is $\Lie(T^*G)=\gg\ltimes_{\ad_{\gg}^*}\gg^*$.  
Then the tangent bundle $T(T^*G)=T^*G\ltimes_{\Ad_{T^*G}}\Lie(T^*G)$ 
is a trivial bundle over $T^*G$ with the fiber $\Lie(T^*G)$ using again the left trivialization, 
hence 
$$T(T^*G)=T^*G\times(\gg\times\gg^*)=(G\times\gg^*)\times(\gg\times\gg^*)$$
with the natural projection $T(T^*G)\to T^*G$ given by $((g_0,\xi_0),(X,\xi))\mapsto(g_0,\xi_0)$. 
Then the Liouville 1-form $\sigma\in\Omega^1(T^*G)$ is 
$\sigma\colon T(T^*G)\to\RR$, $((g_0,\xi_0),(X,\xi))\mapsto\langle\xi_0, X\rangle$, 
and 
the canonical symplectic form on $T^*G$ is $-d\sigma\in\Omega^2(T^*G)$ 
(see for instance Ch.~V, \S 7 in \cite{La01}, Ex.~43.9 in \cite{KM97},  or  subsection~6.5 in \cite{CW99}). 
It is easily seen that the value of the 2-form $-d\sigma$ on 
$T_{(g_0,\xi_0)}(T^*G)\simeq\gg\times\gg^*$ is 
given by 
\begin{equation}\label{canonical}
-(d\sigma)_{(g_0,\xi_0)}\colon T_{(g_0,\xi_0)}(T^*G)\times T_{(g_0,\xi_0)}(T^*G)\to\RR,\quad 
((X_1,\xi_1),(X_2,\xi_2))\mapsto\langle\xi_2, X_1\rangle-\langle\xi_1, X_2\rangle .
\end{equation}
Note that the symplectic 2-form $-d\sigma$ is invariant 
under the action of the Lie group $T^*G$ on itself under left translations, 
while the 1-form $\eta$ is not. 
(See also \cite{Li86}.) 

For arbitrary $\mu\in\Oc$ let 
$$M_\mu:=\{m\in M\mid\Ad_M^*(m)\mu=\mu\}$$
be the corresponding coadjoint isotropy group. 
It follows by \eqref{coadj} that 
\begin{equation}\label{isotropy}
M_{\tilde{\delta}_{\1}}
=\{\varphi\in\Fc\mid \phi'_{\1}=0\}\times\{\1\}\subseteq\Fc\rtimes_\lambda G=M.
\end{equation}
We now prove that the smooth mapping 
$\Pi_{\tilde{\delta}_{\1}}\colon M\to\Oc$, $m\mapsto\Ad_M^*(m)\tilde{\delta}_{\1}$ 
 is a trivial bundle with the fiber~$M_{\tilde{\delta}_{\1}}$. 
In fact, since $\dim\gg^*<\infty$, it easily follows by condition~\eqref{orbit0_item5} 
in Definition~\ref{orbit0} that there exists a linear mapping 
$\gg^*\to\Fc$, $\xi\mapsto\varphi_\xi$ such that 
for every $\xi\in\gg^*$ we have $(\varphi_\xi)'_{\1}=\xi$. 
If $\phi,\chi,\psi\in\Fc$, $\psi'_{\1}=0$, and $g\in G$, 
then the equation $(\phi,g)(\psi,\1)=(\chi,g)$ in $M=\Fc\rtimes_\lambda G$ 
is equivalent to $\phi+\lambda_g\psi=\chi$, 
whence $\lambda_{g^{-1}}\phi+\psi=\lambda_{g^{-1}}\chi$. 
Since $\psi'_{\1}=0$, it then follows 
$(\lambda_{g^{-1}}\phi)'_{\1}=(\lambda_{g^{-1}}\chi)'_{\1}$. 
This equation is satisfied for 
$\phi=\lambda_g(\phi_\xi)\in\Fc$, where $\xi:=(\lambda_{g^{-1}}\chi)'_{\1}$. 
Then we can take $\psi:=\lambda_{g^{-1}}\chi-\phi_\xi=\lambda_{g^{-1}}\chi-\lambda_{g^{-1}}\phi$. 
This shows that the smooth cross-section of $\Pi_{\tilde{\delta}_{\1}}$ defined by 
$$\Oc\to\Fc\rtimes_\lambda G,\quad 
(\delta_g,\xi)\mapsto(\lambda_g(\phi_\xi),g)$$
has the property that every element in $\Fc\rtimes_\lambda G$ 
can be uniquely factorized as the product of an element in the image of this cross section 
and an element in the isotropy subgroup $M_{\tilde{\delta}_{\1}}$. 
This implies that $\Pi_{\tilde{\delta}_{\1}}\colon M\to\Oc$ is a trivial bundle. 
For an arbitrary element $\mu\in\Oc$ let $m\in M$ such that $\Ad_M^*(m)\tilde{\delta}_{\1}=\mu$. 
Then the inner automorphism $\Psi\colon M\to M$, $n\mapsto mnm^{-1}$ 
has the property $\Psi(M_{\tilde{\delta}_{\1}})=M_\mu$, whence we easily get 
a factorization property in $M$ with respect to $M_\mu$, 
similar to the one just proved for $M_{\tilde{\delta}_{\1}}$. 
Thus the smooth mapping $\Pi_\mu\colon M\to\Oc$, $m\mapsto\Ad_M^*(m)\mu$, 
is a trivial bundle with the fiber~$M_\mu$. 
It then follows that the classical Kirillov-Kostant-Souriau construction of symplectic forms 
on coadjoint orbits works 
(see for instance Example~4.31 in \cite{Be06}) 
and leads to a symplectic form $\widetilde{\omega}\in\Omega^2(\Oc)$ 
with the properties mentioned in assertion~\eqref{orbit1_item2} in the statement. 

To complete the proof we still have to show that 
the symplectic forms $\omega,\widetilde{\omega}\in\Omega^2(\Oc)$ constructed so far 
actually coincide. 
It follows by \eqref{coadj} that if we identify $\Oc$ to $T^*G$ 
by means of the mapping \eqref{struct}, 
then the differential of the mapping $\Pi_{\tilde{\delta}_{\1}}$ 
at $(0,\1)\in M$ is the linear map
$$\mg=\Fc\rtimes_{\dot{\lambda}}\gg\to T_{(\1,0)}(T^*G)\simeq\gg\times\gg^*,\quad 
(\phi,X)\mapsto(X,\phi'_0).$$
Then \eqref{canonical} shows that the value of the 2-form 
$\Pi_{\tilde{\delta}_{\1}}^*(\omega)=\Pi_{\tilde{\delta}_{\1}}^*(-d\sigma)$ at $(0,\1)\in M$ 
is the bilinear functional 
$$\mg\times\mg\to\RR,\quad 
((\phi_1,X_1),(\phi_2,X_2))\mapsto \langle(\phi_2)'_0,X_1\rangle-\langle(\phi_1)'_0,X_2\rangle
=-\tilde{\delta}_{\1}([(\phi_1,X_1),(\phi_2,X_2)]) $$
(see~\eqref{semidirect_bracket}). 
Thus $\Pi_{\tilde{\delta}_{\1}}^*(\omega)=\Pi_{\tilde{\delta}_{\1}}^*(\widetilde{\omega})$ 
on $\mg=T_{(0,\1)}M$, 
and then $\omega=\widetilde{\omega}$ on $T_{\tilde{\delta}_{\1}}\Oc\simeq T_{(\1,0)}(T^*G)$. 
By using the fact that $\Pi_{\tilde{\delta}_{\1}}\colon M\to\Oc\simeq T^*G$ is a trivial bundle, 
it is then straightforward to check that $\omega=\widetilde{\omega}$ 
(see the proof of Theorem~4.7 in \cite{Ba98}), and we are done. 
\end{proof}

\subsection{Induced representations of semidirect products}\label{induced-repres}
This is a classical topic for \emph{locally compact} groups (see for instance 
Ch.~5 in \cite{Ta86}). 
However in the semidirect product $M = \Fc \rtimes_\lambda G$  we are working with,  the factor $\Fc$ is generally infinite dimensional. 
Therefore in this section we shall provide a detailed construction  
of an appropriate induced representation of $M$.

In order to construct the unitary representation associated with 
the coadjoint orbit $\Oc=\Ad_M^*(M)\tilde{\delta}_{\1}$ in Proposition~\ref{orbit1} 
we need to find a real polarization of the functional $\tilde{\delta}_{\1}\in\mg^*$. 
It is not difficult to check that actually the abelian Lie algebra 
$\Fc\simeq\Fc\times\{0\}\subseteq\Fc\rtimes_{\dot{\lambda}}\gg=\mg$ 
is such a polarization, and the corresponding group is 
$\Fc\simeq\Fc\times\{\1\}\subseteq\Fc\rtimes_{\lambda}G=M$. 
Therefore the representation of the locally convex Lie group $M$ 
associated with its coadjoint orbit $\Oc$ should be the one induced 
from the  representation $\Fc\to\CC$, $\phi\mapsto\exp(\ie{\delta_{\1}}(\phi))=\ee^{\ie\phi(\1)}$. 
We now describe this induced representation in a more general setting.

Assume the setting of Proposition~\ref{top3.6} with 
$G$ an arbitrary topological group and $\Fc\hookrightarrow \Bc_{\mathbb R}(G)$ 
which is invariant under the left translation operators, 
and denote 
$M:=\Fc\rtimes_\lambda G$. 
Recall that the multiplication and the inversion in the topological group $M$ 
are defined by the equations 
$$(\phi_1,g_1)(\phi_2,g_2)=(\phi_1+\lambda_{g_1}\phi_2,g_1g_2)\text{ and }
(\phi,g)^{-1}=(-\lambda_{g^{-1}}\phi,g^{-1}) $$
respectively. 
There exist the embeddings of topological groups 
$\Fc\hookrightarrow M$, $\phi\mapsto (\phi,\1)$, 
and $G\hookrightarrow M$, $g\mapsto (0,g)$, 
and the property 
\begin{equation}\label{induction_star}
(\forall\,(\phi,g)\in M)\quad (\phi,g)=(0,g)(\lambda_{g^{-1}}\phi,\1)
\end{equation}
shows that every element in the semi-direct product $M=\Fc\rtimes_\lambda G$ 
can be uniquely written as a product of elements in the images 
of $G$ and $\Fc$ into $M$. 

Now let $u_0\colon\Fc\to{\mathbb R}$ be a linear continuous functional 
and define $\pi_0\colon\Fc\to{\TT}$, $\phi\mapsto\ee^{\ie u_0(\phi)}$, 
which is a character of the abelian topological group $(\Fc,+)$. 
We also define 
$$M\times_{\Fc}{\CC}:=(M\times\CC)/\sim$$
where $\sim$ is the equivalence relation on $M\times{\CC}$ 
defined by 
\begin{equation}\label{induction_2star}
\bigl(m(\phi,\1),z\bigr)\sim\bigl(m,\pi_0(\phi)z\bigr)
\quad
\text{ whenever }m\in M,\,\phi\in\Fc,\text{ and }z\in{\CC}. 
\end{equation}
We are going to denote by $[(m,z)]$ the equivalence class of 
any $(m,z)\in M\times{\CC}$. 
Note that there exists a natural homeomorphism 
$$M/\Fc\to G,\quad (\phi,g)\Fc\mapsto g$$
(this map is well defined because of \eqref{induction_star}) 
and a continuous surjection 
$$\Pi\colon M\times_{\Fc}{\CC}\to M/\Fc,\quad [(m,z)]\mapsto m\Fc,$$
which is actually a locally trivial bundle with the fiber~${\CC}$. 

There exists a bijective correspondence between the sections 
$\sigma\colon M/\Fc\to M\times_{\Fc}{\CC}$ 
(that is, functions satisfying $\Pi\circ\sigma=\id_{M/\Fc}$) 
and the functions $\widetilde{\sigma}\colon G\to{\CC}$. 
This correspondence is defined by 
\begin{equation}\label{induction_3star}
(\forall x\in G)\quad 
\sigma\bigl((0,x)\Fc\bigr)=\bigl[\bigl((0,x),\widetilde{\sigma}(x)\bigr)\bigr]. 
\end{equation}
Let us denote by $\Gamma_{\Borel}(M/\Fc,M\times_{\Fc}{\CC})$ 
the space of Borel measurable sections, 
so that there exists a linear isomorphism from this space onto 
the space of complex-valued, Borel measurable functions on~$G$, 
\begin{equation}\label{induction_4star}
\Gamma_{\Borel}(M/\Fc,M\times_{\Fc}{\CC})\to\Bc_{\CC}(G),
\quad 
\sigma\mapsto\widetilde{\sigma}. 
\end{equation}
The representation $\pi:=\Ind_{\Fc}^M(\pi_0)$ 
of $M$ induced by $\pi_0\colon\Fc\to{\TT}$ 
is $\pi\colon M\to\End(\Gamma_{\Borel}(M/\Fc,M\times_{\Fc}{\CC}))$ 
defined by 
$$(\pi(m)\sigma)(\mu)=m\sigma(m^{-1}\mu)\quad 
\text{ for }m\in M,\,\sigma\in\Gamma_{\Borel}(M/\Fc,M\times_{\Fc}{\CC}),
\text{ and }\mu\in M/\Fc. $$
We will denote again by 
$\pi\colon M\to\End(\Bc_{\CC}(G))$ 
the corresponding representation obtained 
by~\eqref{induction_4star}. 
To get a specific description of the latter representation~$\pi$, 
note that for every $\phi\in\Fc$ and $g,x\in G$ we have 
$$\begin{aligned}
(\pi(\phi,g)\sigma\bigl((0,x)\Fc\bigr)
&=(\phi,g)\sigma\bigl((\phi,g)^{-1}(0,x)\Fc\bigr) \\
&=(\phi,g)\sigma\bigl((-\lambda_{g^{-1}}\phi,g^{-1})(0,x)\Fc\bigr) \\
&=(\phi,g)\sigma\bigl((-\lambda_{g^{-1}}\phi,g^{-1}x)\Fc\bigr) \\
&=(\phi,g)\sigma\bigl((0,g^{-1}x)\Fc\bigr) 
 \qquad\qquad\qquad\qquad\qquad\qquad\qquad\text{(by~\eqref{induction_star})}\\
&=(\phi,g)\bigl[\bigl((0,g^{-1}x),\widetilde{\sigma}(g^{-1}x)\bigr)\bigr] 
 \qquad\qquad\qquad\qquad\qquad\;\;\text{(by~\eqref{induction_3star})}\\
&=\bigl[\bigl((\phi,g)(0,g^{-1}x),\widetilde{\sigma}(g^{-1}x)\bigr)\bigr] \\
&=\bigl[\bigl((\phi,x),\widetilde{\sigma}(g^{-1}x)\bigr)\bigr] \\
&=\bigl[\bigl((0,x)(\lambda_{x^{-1}}\phi,\1),  
    \widetilde{\sigma}(g^{-1}x)\bigr)\bigr] 
 \qquad\qquad\qquad\qquad\qquad\text{(by~\eqref{induction_star})}\\
&=\bigl[\bigl((0,x),
   \pi_0(\lambda_{x^{-1}}\phi)\widetilde{\sigma}(g^{-1}x)\bigr)\bigr]
  \qquad\qquad\qquad\qquad\qquad\text{(by~\eqref{induction_2star})}
\end{aligned}$$
whence by \eqref{induction_3star} again we get 
$$(\pi(\phi,g)\widetilde{\sigma})(x)
=\pi_0(\lambda_{x^{-1}}\phi)\widetilde{\sigma}(g^{-1}x)
\text{ for }g,x\in G,\, \phi\in\Fc,\text{ and }
\widetilde{\sigma}\in\Bc_{\CC}(G).
$$
For instance, if $u_0=\delta_{\1}\colon\Fc\to\RR$, $\phi\mapsto\phi(\1)$, 
then we get 
$$\pi:=\pi_{\1}\colon M=\Fc\rtimes_\lambda G\to\End(\Bc_{\CC}(G)), \quad 
(\pi_{\1}(\phi,g)\widetilde{\sigma})(x)
=\ee^{\ie\phi(x)}\widetilde{\sigma}(g^{-1}x).
$$
If we define $U\colon\Bc_{\CC}(G)\to\Bc_{\CC}(G)$, 
$(U\widetilde{\sigma})(x)=\widetilde{\sigma}(x^{-1})$, then we get 
the equivalent representation $U\pi_{\1}(\cdot)U^{-1}$ 
with the specific expression 
$$(U\pi_{\1}(\phi,g)U^{-1}\widetilde{\sigma})(x)
=(\pi_{\1}(\phi,g)U^{-1}\widetilde{\sigma})(x^{-1})
=\ee^{\ie\phi(x^{-1})}(U^{-1}\widetilde{\sigma})(g^{-1}x^{-1})
=\ee^{\ie\phi(x^{-1})}\widetilde{\sigma}(xg)
$$
for $g,x\in G$, $\phi\in\Fc$, and 
$\widetilde{\sigma}\in\Bc_{\CC}(G)$.

\section{Magnetic  preduals of the coadjoint orbit $\Oc$} 

\subsection{Auxiliary properties of nilpotent Lie algebras}

\begin{definition}\label{nilp1}
\normalfont
Let $\gg$ be a nilpotent finite-dimensional real Lie algebra 
of dimension~$\ge1$ and define $\gg_0:=\gg$ and 
$$(\forall k\ge1)\quad 
\gg_k=\spa\{[X_k,\dots,[X_1,X_0]\dots]\mid X_0,X_1,\dots,X_k\in\gg\}.
$$
Then $\gg_0\supseteq\gg_1\supseteq\gg_2\supseteq\cdots$ 
and, since $\gg$ is a nilpotent Lie algebra, 
there exists $n\ge0$ with $\gg_n\ne\{0\}=\gg_{n+1}$. 
The number $n\ge0$ is called the \emph{nilpotency index} of $\gg$. 

Note that $[\gg,\gg_n]=\gg_{n+1}=\{0\}$, 
hence $\gg_n$ is contained in the center of~$\gg$. 
In particular, $\gg_n$ is an ideal of $\gg_n$ and then there exists 
a natural Lie bracket on $\gg/\gg_n$ 
which makes the quotient map $q\colon\gg\to\gg/\gg_n$ 
into a homomorphism of Lie algebras. 
It is also easily seen that $\gg/\gg_n$ 
is a nilpotent Lie algebra whose nilpotency index is $n-1$, 
provided that $n\ge1$. 
\qed
\end{definition}

\begin{proposition}\label{nilp2}
If $\gg$ is a nilpotent finite-dimensional real Lie algebra, 
then for every $V\in\gg$ the mapping 
$$\Psi_{\gg,V}\colon\gg\to\gg,\quad Y\mapsto\int\limits_0^1Y\ast(sV)\,\de s$$
is a polynomial diffeomorphism whose inverse is also polynomial 
and which preserves the Lebesgue measure. 
\end{proposition}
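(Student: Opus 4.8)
The plan is to induct on the nilpotency index $n$ of $\gg$ introduced in Definition~\ref{nilp1}, using the central ideal $\gg_n$ and the quotient $\bar\gg:=\gg/\gg_n$ (which has nilpotency index $n-1$). I will rely on three standard properties of the BCH multiplication $\ast$ on the nilpotent Lie algebra $\gg$: (a) $\ast\colon\gg\times\gg\to\gg$ is a polynomial map, because the BCH series terminates; (b) $\ast$ is natural under Lie algebra homomorphisms, so in particular $q(Y\ast Z)=q(Y)\ast q(Z)$ for the quotient map $q\colon\gg\to\bar\gg$; and (c) a central summand splits off additively, $(Z_0+W)\ast X=Z_0+(W\ast X)$ whenever $Z_0$ is central, since every bracket monomial in the BCH series in which $Z_0$ occurs vanishes. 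The base case $n=0$ is the abelian one: then $Y\ast(sV)=Y+sV$, so $\Psi_{\gg,V}(Y)=Y+\tfrac12V$ is a translation, hence a polynomial diffeomorphism with polynomial inverse and Jacobian determinant $1$ (and if $\gg=\{0\}$ there is nothing to prove).

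For the inductive step choose a vector space complement $\mathfrak w$ with $\gg=\gg_n\oplus\mathfrak w$, so that $q$ restricts to a linear isomorphism $\mathfrak w\to\bar\gg$, and use the resulting linear coordinates $Y\leftrightarrow(Y_0,\bar Y)$, $Y_0\in\gg_n$, $\bar Y=q(Y)$; write $\hat Y\in\mathfrak w$ for the $\mathfrak w$-component of $Y$. From (c), $Y\ast(sV)=Y_0+\bigl(\hat Y\ast(sV)\bigr)$, hence $\Psi_{\gg,V}(Y)=Y_0+\Psi_{\gg,V}(\hat Y)$; and from (b), $q\bigl(\Psi_{\gg,V}(Y)\bigr)=\int_0^1 q(Y)\ast\bigl(s\,q(V)\bigr)\de s=\Psi_{\bar\gg,\bar V}(\bar Y)$. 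Putting $R(\bar Y):=\pr_{\gg_n}\bigl(\Psi_{\gg,V}(\hat Y)\bigr)$, which depends polynomially on $\bar Y$ by (a) since $\hat Y$ is linear in $\bar Y$, we obtain in these coordinates
\[
\Psi_{\gg,V}(Y_0,\bar Y)=\bigl(\,Y_0+R(\bar Y)\,,\ \Psi_{\bar\gg,\bar V}(\bar Y)\,\bigr).
\]

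Everything then follows from the induction hypothesis applied to $\Psi_{\bar\gg,\bar V}$. The displayed map is polynomial; it is a bijection with polynomial inverse $(Z_0,\bar Z)\mapsto\bigl(Z_0-R(\Psi_{\bar\gg,\bar V}^{-1}(\bar Z)),\ \Psi_{\bar\gg,\bar V}^{-1}(\bar Z)\bigr)$; and its Jacobian matrix is block upper triangular with an identity block and the block $D\Psi_{\bar\gg,\bar V}$ on the diagonal, so $\det D\Psi_{\gg,V}=\det D\Psi_{\bar\gg,\bar V}\equiv1$, whence $\Psi_{\gg,V}$ preserves Lebesgue measure. Conceptually the same computation shows that, in exponential coordinates adapted to a refinement of the lower central series to one-dimensional steps, $\Psi_{\gg,V}$ is a unipotent polynomial triangular map, which is the underlying reason all three claims hold simultaneously.

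The part I expect to carry the actual content is the verification of (b) and (c): both are consequences of the explicit Dynkin form of BCH together with nilpotency (which turns the relevant formal identities into genuine identities in $\gg$), but they are precisely what makes the recursion on $\gg_n$ collapse to the clean block-triangular form above. Once they are in hand, the diffeomorphism property, the polynomiality of the inverse, and the invariance of Lebesgue measure are routine bookkeeping with the Jacobian.
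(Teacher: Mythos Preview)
Your proof is correct and follows the same induction on the nilpotency index via the central ideal $\gg_n$ as the paper, relying on precisely the same two facts you label (b) and (c). Your presentation is somewhat tidier: by writing $\Psi_{\gg,V}$ directly in block-triangular form with respect to the splitting $\gg=\gg_n\oplus\mathfrak w$, you read off bijectivity, the explicit polynomial inverse, and the Jacobian determinant in one stroke, whereas the paper establishes injectivity, surjectivity (with the same inverse formula in slightly different notation), and measure preservation as three separate arguments.
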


\begin{proof}
Recall that the multiplication $\ast$ defined by 
the Baker-Campbell-Hausdorff (BCH) formula is a polynomial mapping 
in the case of the nilpotent Lie algebras, and therefore the mapping in 
the statement is polynomial. 
To prove the other properties 
we shall proceed by induction on the nilpotency index 
of the Lie algebra under consideration. 

If the nilpotency index of $\gg$ is $0$, then this algebra is abelian, 
so the BCH multiplication $\ast$ reduces to the vector sum. 
Then for every $V\in\gg$ we have 
$$(\forall Y\in\gg)\quad 
\Psi_{\gg,V}(Y)=\int\limits_0^1Y+sV\,\de s=Y+\frac{1}{2}V,$$
which clearly has the properties we wish for. 

Now let $n\ge1$ and assume that the assertion holds for the Lie algebras 
of nilpotency index~$<n$. 
Let $\gg$ be a nilpotent Lie algebra with $\gg_n\ne\{0\}=\gg_{n+1}$ 
(see the notation in Definition~\ref{nilp1}) and take $V\in\gg$ arbitrary. 
To show that the mapping $\Psi_{\gg,V}\colon\gg\to\gg$ is injective, 
let $Y_1,Y_2\in\gg$ such that $\Psi_{\gg,V}(Y_1)=\Psi_{\gg,V}(Y_2)$. 
If we transform both sides of the latter equation by 
the Lie algebra homomorphism $q\colon\gg\to\gg/\gg_n$ 
which preserves the BCH multiplication, 
then we get 
$\int\limits_0^1q(Y_1)\ast(sq(V))\,\de s=\int\limits_0^1q(Y_2)\ast(sq(V))\,\de s$. 
Since the mapping $\Psi_{\gg/\gg_n,q(V)}\colon\gg/\gg_n\to\gg/\gg_n$ 
is injective by the induction hypothesis, it follows that 
$q(Y_1)=q(Y_2)$, that is, $Y_0:=Y_1-Y_2\in\Ker q=\gg_n$. 
Then 
$$\begin{aligned}
\Psi_{\gg,V}(Y_1)
&=\Psi_{\gg,V}(Y_2+Y_0)
=\int\limits_0^1(Y_0+Y_2)\ast(sV)\,\de s 
=\int\limits_0^1Y_0+(Y_2\ast(sV))\,\de s \\
&=Y_0+\int\limits_0^1Y_2\ast(sV)\,\de s 
=Y_0+\Psi_{\gg,V}(Y_2),
\end{aligned}$$
so the assumption $\Psi_{\gg,V}(Y_1)=\Psi_{\gg,V}(Y_2)$ 
implies $Y_0=0$, whence $Y_1=Y_2$. 
We note that the above equalities follow by using 
the definition of the BCH multiplication $\ast$ 
along with the fact that $Y_0\in\gg_n$, hence $[Y_0,\gg]=\{0\}$. 

It remains to check that the mapping $\Psi_{\gg,V}\colon\gg\to\gg$ 
is surjective and its inverse is polynomial. 
For that purpose let $\iota\colon\gg/\gg_n\to\gg$ 
be any linear mapping satisfying $q\circ\iota=\id_{\gg/\gg_n}$. 
(So $\iota$ can be any linear isomorphism of $\gg/\gg_n$ 
onto a linear complement of $\gg_n$ in~$\gg$.) 
Denote 
\begin{equation}\label{nilp2_star}
(\forall Z\in\gg)\quad 
\Delta(Z):=Z-\int\limits_0^1\iota(\Phi(q(Z)))\ast(sV)\,\de s,
\end{equation}
where $\Phi:=(\Psi_{\gg/\gg_n,q(V)})^{-1}\colon\gg/\gg_n\to\gg/\gg_n$ 
is a polynomial map which exists because of the induction hypothesis. 
Note that for every $Z\in\gg$ we have 
$$\begin{aligned}
q(\Delta(Z))
&=q(Z)-q\Bigl(\int\limits_0^1\iota(\Phi(q(Z)))\ast(sV)\,\de s\Bigr) 
=q(Z)-\int\limits_0^1q(\iota(\Phi(q(Z))))\ast q(sV)\,\de s \\
&=q(Z)-\int\limits_0^1\Phi(q(Z))\ast(sq(V))\,\de s 
=0,
\end{aligned}$$
where we used the equality $q\circ\iota=\id_{\gg/\gg_n}$ 
and again the fact that $q\colon\gg\to\gg/\gg_n$ 
is a Lie algebra homomorphism hence preserves the BCH multiplications. 
Since $\Ker q=\gg_n$ and $[\gg,\gg_n]=\gg_{n+1}=\{0\}$, 
we get 
$$(\forall Z\in\gg)\quad [\Delta(Z),\gg]=\{0\}.$$
We can use this property to see that 
(as in the above proof of the fact that $\Psi_{\gg,V}$ is injective) 
we have for every $Z\in\gg$, 
$$\begin{aligned}
Z
&=\Delta(Z)+\int\limits_0^1\iota(\Phi(q(Z)))\ast(sV)\,\de s 
=\int\limits_0^1\Delta(Z)+(\iota\circ\Phi\circ q)(Z)\ast(sV)\,\de s \\
&=\int\limits_0^1(\Delta(Z)+(\iota\circ\Phi\circ q)(Z))\ast(sV)\,\de s 
=\Psi_{\gg,V}(\Delta(Z)+(\iota\circ\Phi\circ q)(Z)).
\end{aligned}$$
This shows that the mapping $\Psi_{\gg,V}\colon\gg\to\gg$ 
is indeed surjective and 
\begin{equation}\label{nilp2_2star}
(\forall Z\in\gg)\quad 
(\Psi_{\gg,V})^{-1}(Z)=\Delta(Z)+(\iota\circ\Phi\circ q)(Z). 
\end{equation}
To conclude the proof, just recall that 
$\Phi=(\Psi_{\gg/\gg_n,q(V)})^{-1}\colon\gg/\gg_n\to\gg/\gg_n$ 
is a polynomial map by the induction hypothesis, 
while the BCH multiplication  
is a polynomial mapping on every nilpotent Lie algebra. 
Since both $\iota$ and $q$ are linear, 
it follows by \eqref{nilp2_star} that $\Delta\colon\gg\to\gg$ 
is a polynomial mapping, and then \eqref{nilp2_2star} 
shows that so is $(\Psi_{\gg,V})^{-1}\colon\gg\to\gg$. 

As regards the measure-preserving property, it will be enough to show that 
for an arbitrary $Y_0\in\gg$ the differential $(\Psi_{\gg,V})'_{Y_0}\colon\gg\to\gg$ 
is a linear map whose determinant is equal to~$1$. 
To this end note that for every $Y\in\gg_n$ we have $[\gg,Y]=\{0\}$ 
hence $\Psi_{\gg,V}(Y)=\int\limits_0^1 Y+sV\,\de s=Y+\frac{1}{2}V$, 
which implies that  $\gg_n$ is invariant under the differential $(\Psi_{\gg,V})'_{Y_0}$. 
Actually, the latter map restricted to $\gg_n$ is equal to the identity map on $\gg_n$, 
and in particular the determinant of that restriction is equal to~$1$. 
On the other hand, as above in the proof of injectivity of $\Psi_{\gg,V}$,  
we get $q\circ\Psi_{\gg,V}=\Psi_{\gg/\gg_n,q(V)}\circ q$. 
By differentiating this equality at $Y_0\in\gg$ and taking into account 
that $q\colon\gg\to\gg/\gg_n$ is a linear map, we get 
$q\circ(\Psi_{\gg,V})'_{Y_0}=(\Psi_{\gg/\gg_n,q(V)})'_{q(Y_0)}\circ q$, and 
then we get the following commutative diagram 
$$\begin{CD}
 \gg_n @>>> \gg @>{q}>> \gg/\gg_n  \\
 @VV{\id_{\gg_n}}V @VV{(\Psi_{\gg,V})'_{Y_0}}V @VV{(\Psi_{\gg/\gg_n,q(V)})'_{q(Y_0)}}V \\
 \gg_n @>>> \gg @>{q}>> \gg/\gg_n  \\
\end{CD}$$
whose rows are short exact sequences. 
Since the determinant of $(\Psi_{\gg/\gg_n,q(V)})'_{q(Y_0)}$ is equal to~$1$ 
by the induction hypothesis, it follows that 
the determinant of the middle vertical arrow is also equal to~1. 
This completes the induction step and the proof. 
\end{proof}

\subsection{Magnetic preduals and global coordinates for $\Oc$}\label{magn_quant}

We shall work in the following setting:
\begin{enumerate}
\item The simply connected nilpotent Lie group $G$ is identified with its Lie algebra $\gg$ 
by means of the exponential map and $\ast$ denotes the Baker-Campbell-Hausdorff multiplication on~$\gg$. 
\item  We denote by 
$\langle\cdot,\cdot\rangle\colon\gg^*\times\gg\to\RR$ the canonical duality pairing. 
\item 
We also denote by $\Fc$ an admissible space of functions on $\gg$ 
which contains both $\gg^*$ and the constant functions. 
As usual, we denote by  $M=\Fc\rtimes_\lambda\gg$ the corresponding semidirect product of groups, 
which is a locally convex Lie group with the Lie algebra $T_{(0,0)}M=\mg=\Fc\rtimes_{\dot{\lambda}}\gg$. 
Here we shall distinguish $\mg$ from the set $\Lie(M)$ of one-parameter subgroups in $M$. 
\item The \emph{magnetic potential} $A\in\Omega^1(\gg)$ is 
a smooth differential 1-form whose coefficients belong to~$\Fc$. 
That is, $A\colon\gg\to\gg^*$, $X\mapsto A_X:=A(X)$, is a smooth mapping such that 
for every $X\in\gg$ the function $Y\mapsto\langle A_Y,(R_Y)'_0X\rangle$ 
belongs to~$\Fc$, where $R_Y\colon\gg\to\gg$, $R_Y(W)=W\ast Y$. 
\item The \emph{magnetic field} is the 2-form $B=dA\in\Omega^2(\gg)$. 
Hence $B$ is a smooth mapping $X\mapsto B_X$ from~$\gg$ into the space of 
all skew-symetric bilinear functionals on $\gg$ 
such that 
$$(\forall X,X_1,X_2\in\gg)\quad 
B_X(X_1,X_2)=\langle A'_X(X_1),X_2\rangle-\langle A'_X(X_2),X_1\rangle.$$
\end{enumerate}

\begin{proposition}\label{pullback}
For every $\phi\in\Fc$ and $X\in\gg$ define $\bar\theta_0^A(\phi,X)\in\Fc$ by 
$$(\forall Y\in\gg)\quad (\bar\theta_0^A(\phi,X)) (Y)=\phi(Y)+\langle A_Y,(R_Y)'_0X\rangle,$$
and then consider the continuous linear mapping 
$$\bar\theta^A\colon\Fc\rtimes_{\dot\lambda}\gg=\mg\to\mg,\quad 
\bar\theta(\phi,X)=(\bar\theta_0^A(\phi,X),X)$$
and the differential 2-forms 
$$\bar\omega\in\Omega^2(\Fc\times\gg),\quad 
\bar\omega_{(\phi_0,X_0)}((\phi_1,X_1),(\phi_2,X_2))=
(\dot{\lambda}(X_1)\phi_2-\dot{\lambda}(X_2)\phi_1)(X_0)
$$
and 
$$\bar B\in\Omega^2(\Fc\times\gg),\quad 
\bar B_{(\phi_0,X_0)}((\phi_1,X_1),(\phi_2,X_2))=B_{X_0}(X_1,X_2).$$
Then the following assertions hold: 
\begin{enumerate}
\item The operator $\bar\theta^A\colon\mg\to\mg$ is invertible and 
$(\bar\theta^A)^{-1}=\bar\theta^{-A}$. 
\item If $\gg$ is a two-step nilpotent Lie algebra, 
then 
$$\bar\omega\in\Omega^2(\Fc\times\gg),\quad 
\bar\omega_{(\phi_0,X_0)}((\phi_1,X_1),(\phi_2,X_2))=(\phi_2)'_{X_0}(X_1)-(\phi_1)'_{X_0}(X_2).$$
Moreover, $d\bar\omega=0$ and $(\bar\theta^A)^*(\bar\omega)=\bar\omega+\bar B$. 
\end{enumerate}
\end{proposition}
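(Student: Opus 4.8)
The plan is to handle item~(1) by pure linear algebra and item~(2) by exploiting that the Baker--Campbell--Hausdorff multiplication on a two-step nilpotent Lie algebra terminates after the first bracket. For item~(1), first observe that $\bar\theta^A$ is a well-defined continuous linear endomorphism of $\mg=\Fc\rtimes_{\dot\lambda}\gg$: the assignment $X\mapsto a_X$, where $a_X\in\Fc$ is the function $Y\mapsto\langle A_Y,(R_Y)'_0X\rangle$, lands in $\Fc$ by the standing hypothesis on the magnetic potential, it is linear in $X$, hence continuous since $\gg$ is finite-dimensional, so $(\phi,X)\mapsto(\phi+a_X,X)=\bar\theta^A(\phi,X)$ is linear and continuous on $\Fc\times\gg$. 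Since $\bar\theta^A$ fixes the $\gg$-component and translates the $\Fc$-component by $a_X$, while $\bar\theta^{-A}$ translates by $-a_X$ (the dependence $A\mapsto a_X$ being linear in $A$), a one-line substitution gives $\bar\theta^{-A}\circ\bar\theta^A=\id_\mg=\bar\theta^A\circ\bar\theta^{-A}$, i.e. $(\bar\theta^A)^{-1}=\bar\theta^{-A}$.

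For item~(2) I would start from~\eqref{lambda_diff}, which after identifying $G$ with $\gg$ reads $(\dot\lambda(X)\phi)(Y)=-\langle((R_Y)'_0)^*\phi'_Y,X\rangle=-\phi'_Y\bigl((R_Y)'_0X\bigr)$. When $\gg$ is two-step nilpotent, $U\ast V=U+V+\tfrac12[U,V]$ with all higher BCH terms vanishing because they lie in $[\gg,[\gg,\gg]]=\{0\}$; hence $R_Y(W)=W\ast Y$ has derivative $(R_Y)'_0=\id_\gg-\tfrac12\ad Y$ at the origin. Substituting this into the definition of $\bar\omega$ and forming the antisymmetric combination $\dot\lambda(X_1)\phi_2-\dot\lambda(X_2)\phi_1$ evaluated at $X_0$, the correction terms coming from the $\ad Y$-part of $(R_Y)'_0$ collect into a sum that cancels, using that $[\gg,\gg]$ is central in $\gg$ (so the iterated brackets $[[X_i,X_0],\cdot]$ vanish) and the Jacobi identity; what remains is exactly the bilinear form $((\phi_1,X_1),(\phi_2,X_2))\mapsto(\phi_2)'_{X_0}(X_1)-(\phi_1)'_{X_0}(X_2)$, which is the first assertion of~(2).

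Once this closed formula is available the remaining two assertions are quick. For $\de\bar\omega=0$ I would exhibit an explicit primitive: the $1$-form $\bar\sigma\in\Omega^1(\Fc\times\gg)$ given by $\bar\sigma_{(\phi_0,X_0)}(\phi_1,X_1)=\phi_1(X_0)$ satisfies, since constant vector fields on the vector space $\Fc\times\gg$ have vanishing bracket,
$$(\de\bar\sigma)_{(\phi_0,X_0)}\bigl((\phi_1,X_1),(\phi_2,X_2)\bigr)=(\phi_2)'_{X_0}(X_1)-(\phi_1)'_{X_0}(X_2)=\bar\omega_{(\phi_0,X_0)}\bigl((\phi_1,X_1),(\phi_2,X_2)\bigr),$$
so $\bar\omega=\de\bar\sigma$ and $\de\bar\omega=0$; alternatively one may note that $\bar\omega$ is, up to sign, a pull-back of the Kirillov--Kostant--Souriau form on $\Oc$ via the coadjoint-type map of Proposition~\ref{orbit1} and invoke closedness from there. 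Finally, since $\bar\theta^A$ is linear we have $\de\bar\theta^A=\bar\theta^A$, and using $\bar\theta^A(\psi,Z)=(\psi+a_Z,Z)$,
$$\bigl((\bar\theta^A)^*\bar\omega-\bar\omega\bigr)_{(\phi_0,X_0)}\bigl((\phi_1,X_1),(\phi_2,X_2)\bigr)=(a_{X_2})'_{X_0}(X_1)-(a_{X_1})'_{X_0}(X_2).$$
Inserting $a_X(Y)=\langle A_Y,(R_Y)'_0X\rangle$, expanding the derivative at $Y=X_0$ in the directions $X_1,X_2$, and using once more that $[\gg,\gg]$ is central together with the definition $B_X(X_1,X_2)=\langle A'_X(X_1),X_2\rangle-\langle A'_X(X_2),X_1\rangle$ of $B=\de A$, one finds that this difference equals $B_{X_0}(X_1,X_2)=\bar B_{(\phi_0,X_0)}((\phi_1,X_1),(\phi_2,X_2))$; hence $(\bar\theta^A)^*\bar\omega=\bar\omega+\bar B$.

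The main obstacle is the algebra in item~(2): one must carry the one-bracket BCH correction $-\tfrac12\ad Y$ through both the expansion of $\dot\lambda$ and that of $a_X$, and check that after antisymmetrization all correction terms cancel against one another. This works precisely because $\gg$ is two-step (so $[\gg,\gg]$ is central and BCH is affine in the single bracket), and it is the reason the analogous formula for a general simply connected nilpotent $G$ is deferred; everything else — item~(1), the closedness via the primitive $\bar\sigma$, and the linearity of $\bar\theta^A$ — is formal.
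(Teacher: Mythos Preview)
Your handling of item~(1) is correct and matches the paper's one-line dismissal (``easily seen'').

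There is, however, a genuine gap in your argument for the simplified form of $\bar\omega$ in item~(2). You correctly compute $(R_Y)'_0=\id_\gg-\tfrac12\ad Y$ on a two-step nilpotent algebra, but after inserting this into $(\dot\lambda(X_1)\phi_2-\dot\lambda(X_2)\phi_1)(X_0)$ the residual ``bracket'' contribution is
\[
\tfrac12(\phi_2)'_{X_0}\bigl([X_0,X_1]\bigr)-\tfrac12(\phi_1)'_{X_0}\bigl([X_0,X_2]\bigr),
\]
and this does \emph{not} vanish by the mechanism you invoke: set $\phi_1=0$ and choose $\phi_2$ with $(\phi_2)'_{X_0}([X_0,X_1])\ne0$ (generic already on the Heisenberg algebra). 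The appeal to ``iterated brackets $[[X_i,X_0],\cdot]$ vanish'' and to Jacobi is misplaced, since what is left is a first derivative of $\phi_j$ evaluated on a \emph{single} Lie bracket, not an iterated one. The same difficulty recurs in your pullback computation: differentiating $a_X(Y)=\langle A_Y,(R_Y)'_0X\rangle$ at $Y=X_0$ produces extra terms such as $\tfrac12\langle A_{X_0},[X_2,X_1]\rangle$ and $\tfrac12\langle A'_{X_0}(X_1),[X_2,X_0]\rangle$ that are not disposed of by the centrality of $[\gg,\gg]$ alone.

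By contrast, the paper simply asserts that $(R_Y)'_0=\id_\gg$ for two-step nilpotent $\gg$, from which the simplified formula for $\bar\omega$ and the pullback identity follow in two lines each; whatever one thinks of that assertion, it is the route the paper takes, and it bypasses entirely the cancellation you are trying to engineer. For $d\bar\omega=0$ the paper does not use a primitive: it regards $\bar\omega$ as a smooth map into skew bilinear forms, differentiates once, and antisymmetrizes over three tangent vectors, obtaining zero because each $(\phi_j)''_{X_0}$ is symmetric. Your primitive $\bar\sigma$ is a pleasant alternative, but note that it, too, only proves closedness of the \emph{simplified} expression $(\phi_2)'_{X_0}(X_1)-(\phi_1)'_{X_0}(X_2)$, so it is hostage to the very identification that remains unjustified.
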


\begin{proof} 
The first assertion is easily seen. 
For the second assertion, note that if $\gg$ is two-step nilpotent, 
then $(R_Y)'_0=\id_{\gg}$ for every $Y\in\gg$, 
hence the specific expression of $\bar\omega$ follows by~\eqref{lambda_diff}. 
If we regard $\bar\omega$ as a mapping from $\Fc\times\gg$ into the skew-symmmetric bilinear functionals 
on $\Fc\times\gg$, then we may dfferentiate it as such and we get 
$$\begin{aligned}
d\bar\omega_{(\phi_0,X_0)}((\phi_1,X_1),(\phi_2,X_2),(\phi_3,X_3))
=&\bar\omega'_{(\phi_0,X_0)}(\phi_1,X_1)((\phi_2,X_2),(\phi_3,X_3)) \\
&-\bar\omega'_{(\phi_0,X_0)}(\phi_2,X_2)((\phi_1,X_1),(\phi_3,X_3)) \\
&+\bar\omega'_{(\phi_0,X_0)}(\phi_3,X_3)((\phi_1,X_1),(\phi_2,X_2)) \\
=&(\phi_3)''_{X_0}(X_1,X_2)-(\phi_2)''_{X_0}(X_1,X_3) \\
&-(\phi_3)''_{X_0}(X_2,X_1)+(\phi_1)''_{X_0}(X_2,X_3) \\
&+(\phi_2)''_{X_0}(X_3,X_1)-(\phi_1)''_{X_0}(X_3,X_2) \\
=&0
\end{aligned}
$$
since the second differentials of the smooth functions $\phi_1,\phi_2,\phi_3\in\Fc$ are symmetric. 
Further, since $\bar\theta$ is a linear map we get 
$$\begin{aligned}
\bar\theta^*(\bar\omega)_{(\phi_0,X_0)}((\phi_1,X_1),(\phi_2,X_2))
=&\bar\omega_{\bar\theta(\phi_0,X_0)}(\bar\theta(\phi_1,X_1),\bar\theta(\phi_2,X_2)) \\
=&\bar\omega_{(\phi_0+\langle A(\cdot),X_0\rangle,X_0)}
 ((\phi_1+\langle A(\cdot),X_1\rangle,X_1),(\phi_2+\langle A(\cdot),X_2\rangle,X_2)) \\
=&(\phi_2)'_{X_0}(X_1)+\langle A'_{X_0}(X_1),X_2\rangle 
-(\phi_1)'_{X_0}(X_2)-\langle A'_{X_0}(X_2),X_1\rangle  \\
=&\bar\omega_{(\phi_0,X_0)}((\phi_1,X_1),(\phi_2,X_2))+\bar B_{(\phi_0,X_0)}((\phi_1,X_1),(\phi_2,X_2)), 
\end{aligned}$$
and this completes the proof. 
\end{proof}

\begin{definition}\label{predual}
\normalfont
Assume the notation introduced in Proposition~\ref{pullback}. 
The set 
$$\Oc_*=\Oc_*^A:=\{(\bar\theta_0^A(\xi,X),X)\mid X\in\gg,\xi\in\gg^*\}
\subseteq\Fc\rtimes_{\dot\lambda}\gg=\mg
$$
will be called the \emph{magnetic predual of the coadjoint orbit~$\Oc$} 
(associated with the magnetic potential~$A$). 
Let 
$I\colon \gg\times\gg^*\hookrightarrow\Fc\times\gg$ 
be the natural embedding $I(X,\xi)=(\xi,X)$. 
Then the mapping 
\begin{equation}\label{magn_isom}
\bar\theta^A\circ I\colon\gg\times\gg^*\to\Oc_*
\end{equation}
is a linear isomorphism which (by Proposition~\ref{pullback}) 
takes the canonical symplectic structure of $\gg\times\gg^*$ 
to a certain symplectic structure on $\Oc_*$, 
which will be called the \emph{natural symplectic structure of the magnetic predual~$\Oc_*$}. 
Thus~\eqref{magn_isom} is an isomorphism of symplectic vector spaces.
\qed
\end{definition}

\begin{remark}\label{predual_conj}
\normalfont
The magnetic predual $\Oc_*^A$ essentially depends only on 
the magnetic field $B=dA$. 
Specifically, if $A_1,A_2\in\Omega^1(\gg)$ are magnetic potentials 
then there exists $m_0=(\phi_0,X_0)\in M$ such that 
$\bar\theta^{A_1}=\Ad_M(m)\circ\bar\theta^{A_2}$ 
if and only if $dA_1=dA_2$. 
This follows by using Remark~\ref{semidirect_rem}\eqref{semidirect_rem_item2}. 
\qed
\end{remark}

In the following statement we need some notation from Propositions \ref{top3.6}~and~\ref{orbit1}. 
Thus, $\delta_0\colon\Fc\to\RR$ is the functional $\phi\mapsto\phi(0)$. 

\begin{proposition}\label{para}
Let us define 
$$\theta_0\colon\gg\times\gg^*\to\Fc,\quad (\theta_0(X,\xi))(y)=\langle\xi,Y\rangle+\langle A_Y,(R_Y)'_0X\rangle$$
and 
$$\theta\colon\gg\times\gg^*\to\Lie(M),\quad \theta(X,\xi)=Z_{\theta_0(X,\xi),X}.$$ 
Then the mapping 
$$\Pi_{\tilde{\delta}_0}\circ\exp_M\circ\theta\colon\gg\times\gg^*\to\mg^*,\quad  (X,\xi)\mapsto\Ad_M^*(\exp_M(\theta(X,\xi)))\tilde{\delta}_0$$
is a diffeomorphism of $\gg\times\gg^*$ onto the coadjoint orbit $\Oc$ of 
$\tilde{\delta}_0=(\delta_0,0)\in\Fc^*\times\gg^*=\mg^*$. 
\end{proposition}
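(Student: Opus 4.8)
The plan is to reduce the statement to the symplectomorphism already established in Proposition~\ref{orbit1} together with the diffeomorphism property of $\Psi_{\gg,V}$ from Proposition~\ref{nilp2}. First I would compute the composite map explicitly. By Proposition~\ref{top3.6}\eqref{top3.6_item2} we have $\exp_M(\theta(X,\xi))=\exp_M(Z_{\theta_0(X,\xi),X})=Z_{\theta_0(X,\xi),X}(1)=(\beta(X)\theta_0(X,\xi),\exp_G X)$, so by the coadjoint action formula \eqref{coadj} (with $\tilde\delta_0$ in place of $\tilde\delta_{\1}$) the composite sends $(X,\xi)$ to $(\delta_{\exp_G X},\,((R_{\exp_G X})'_0)^*((\beta(X)\theta_0(X,\xi))'_{\exp_G X}))\in\Oc$. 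Identifying $\gg$ with $G$ via $\exp_G$ and using the diffeomorphism \eqref{struct} from $T^*G$ onto $\Oc$, the claim becomes that the map $(X,\xi)\mapsto\bigl(X,\,((R_X)'_0)^*((\beta(X)\theta_0(X,\xi))'_X)\bigr)$ is a diffeomorphism of $\gg\times\gg^*$ onto $T^*G=\gg\times\gg^*$.

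Next I would make the cotangent-fibre component explicit. Since $(\beta(X)\phi)(Y)=\int_0^1\phi(\exp_G(sX)\ast Y)\,\de s$ — here I am using that $\alpha$ is the left-translation action so $\alpha_{X(s)}\phi$ evaluated at $Y$ is $\phi(\exp_G(-sX)\ast Y)$, and a change of variable folds the sign into the orientation of the integral — one differentiates at $Y=X$ (this is the point $\exp_G X$ under our identification) and pulls back by $(R_X)'_0$. The key observation is that the contribution of the $\langle\xi,\cdot\rangle$-part of $\theta_0(X,\xi)$ to this fibre component is exactly $\Psi_{\gg,X}^*$ applied to $\xi$ in a suitable sense: evaluating $\de\bigl(\int_0^1\langle\xi,(sX)\ast(\cdot)\rangle\,\de s\bigr)$ and transporting by the right-translation Jacobian produces the linear automorphism of $\gg^*$ dual to $Y\mapsto\int_0^1(sX)\ast Y\,\de s$ restricted appropriately — that is, up to the obvious identifications, the transpose of the differential of $\Psi_{\gg,X}$. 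By Proposition~\ref{nilp2} this is a (polynomial) linear isomorphism for each fixed $X$, depending smoothly on $X$, with polynomial inverse. The remaining contribution, coming from the magnetic term $\langle A_Y,(R_Y)'_0X\rangle$, depends only on $X$ (not on $\xi$) once we fix $X$, so it is an $X$-dependent translation of the fibre.

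With these two ingredients the structure of the map is: $(X,\xi)\mapsto(X,\,L_X\xi+v(X))$ where $L_X\in\GL(\gg^*)$ depends smoothly (polynomially) on $X$ with polynomially-varying inverse, and $v\colon\gg\to\gg^*$ is smooth. Such a map is a diffeomorphism of $\gg\times\gg^*$ with inverse $(X,\eta)\mapsto(X,\,L_X^{-1}(\eta-v(X)))$, which is smooth. Thus the composite is a diffeomorphism onto $T^*G$, hence onto $\Oc$. A cleaner way to package the same argument, which I would probably use, is to invoke Proposition~\ref{pullback} and Definition~\ref{predual} directly: the map $\theta$ is built from $\theta_0$ in exactly the way that $\bar\theta^A\circ I$ builds the magnetic predual $\Oc_*$, and $\exp_M\circ\theta$ followed by $\Pi_{\tilde\delta_0}$ realizes on $\Oc_*$ the trivial-bundle cross-section of $\Pi_{\tilde\delta_0}\colon M\to\Oc$ composed with $\Pi_{\tilde\delta_0}$; since that cross-section is a diffeomorphism onto its image and $\Pi_{\tilde\delta_0}$ restricted to the image of the cross-section is a diffeomorphism onto $\Oc$, we are done.

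The main obstacle is the explicit identification of the fibre component with (the transpose of) the differential of $\Psi_{\gg,X}$ — i.e.\ correctly bookkeeping the three linear maps in play: the inner derivative of $Y\mapsto(sX)\ast Y$ evaluated at $Y=X$, the right-translation Jacobian $(R_X)'_0$, and the duality transposes. One must check that the $\xi$-dependence, after all these compositions, is governed by a linear automorphism of $\gg^*$ whose invertibility is precisely the content of Proposition~\ref{nilp2}; the non-abelian BCH corrections (absent in the $G=\RR^n$ case, where $\theta$ is the affine map $(x,\xi)\mapsto(\xi+A(x),x)$) are what make this non-trivial, and Proposition~\ref{nilp2} was proved for exactly this reason. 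The smoothness of $X\mapsto L_X^{-1}$, needed for smoothness of the inverse, is guaranteed by the polynomial-inverse clause of Proposition~\ref{nilp2}. Everything else — smoothness of $\theta$, of $\exp_M$ on the relevant one-parameter subgroups, and of the coadjoint map — is routine given the locally convex Lie group structure already set up in Propositions~\ref{top3.6} and~\ref{orbit1}.
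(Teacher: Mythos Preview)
Your main argument is correct and is organized somewhat differently from the paper's. You compute the composite directly via \eqref{coadj}, identify it through \eqref{struct} with a self-map of $\gg\times\gg^*$ of the form $(X,\xi)\mapsto(X,L_X\xi+v(X))$, and then invoke Proposition~\ref{nilp2} to conclude that $L_X$ is invertible with polynomially (hence smoothly) varying inverse. The paper instead works upstairs in $M$: it shows that $\exp_M(\theta(\gg\times\gg^*))$ provides a unique factorization $M=\exp_M(\theta(\gg\times\gg^*))\cdot M_{\tilde\delta_0}$ by separately proving injectivity of $\exp_M\circ\theta$, transversality to the isotropy group, and surjectivity of the multiplication map; the key step in each case is to differentiate the relation $\langle\xi,\Psi_{\gg,X}(\cdot)\rangle=\cdots$ at $0$ and use the invertibility of $(\Psi_{\gg,X})'_0$. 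Your route is more direct and avoids the surjectivity argument for \eqref{exp_theta_mult}; the paper's route makes the role of the isotropy group $M_{\tilde\delta_0}=\{\phi:\phi'_0=0\}$ more transparent. Both rest on exactly the same use of Proposition~\ref{nilp2}. Two small bookkeeping points: $(\beta(X)\phi)(Y)=\int_0^1\phi((-sX)\ast Y)\,\de s$, so your sign remark is slightly off (the substitution $s\mapsto -s$ does not land you back on $[0,1]$), and the linear isomorphism you obtain is really $L_X=((\Psi_{\gg,X})'_{-X}\circ(R_X)'_0)^*$ rather than the bare transpose of $(\Psi_{\gg,X})'$ --- but you already flag this and neither point affects invertibility.

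Your proposed ``cleaner way'' at the end, however, does not work as stated. The trivial-bundle cross-section constructed in Proposition~\ref{orbit1}, namely $(\delta_g,\xi)\mapsto(\lambda_g(\varphi_\xi),g)$, is \emph{not} the map $\exp_M\circ\theta$; that they hit the same fibres is precisely what you are trying to prove. And invoking Definition~\ref{predual} together with the diffeomorphism $\Oc_*\to\Oc$ is circular, since that diffeomorphism (Corollary~\ref{predual_diffeo}) is a \emph{consequence} of the present proposition. So keep your first argument and drop the repackaging.
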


\begin{proof}
Let $M_{\tilde{\delta}_0}$ be the coadjoint isotropy group at $\tilde{\delta}_0\in\mg^*$. 
To prove that $\Phi\colon\gg\times\gg^*\to\mg^*$ is a bijection onto 
$\Ad_M^*(M)\tilde{\delta}_0\simeq M/M_{\tilde{\delta}_0}$ it is necessary and sufficient 
to see that the following assertions hold: 
\begin{enumerate}
\item\label{exp_theta_inj} 
The mapping $\gg\times\gg^*\to M$, $(X,\xi)\mapsto\exp_M(\theta(X,\xi))$ is injective. 
\item\label{exp_theta_main}
The multiplication mapping 
\begin{equation}\label{exp_theta_mult}
\exp_M(\theta(\gg\times\gg^*))\times M_{\tilde{\delta}_0}\to M
\end{equation}
is bijective and additionally, if $m_1,m_2\in\exp_M(\theta(\gg\times\gg^*))$ satisfy $m_1\in m_2M_{\tilde{\delta}_0}$, 
then necessarily $m_1=m_2$. 
\end{enumerate}
In order to prove these assertions we shall use fact that by Proposition~\ref{top3.6} we have 
\begin{equation}\label{exp_theta}
(\forall(X,\xi)\in\gg\times\gg^*)\quad \exp_M(\theta(X,\xi))=(\alpha(X,\xi),X)\in\Fc\rtimes_\lambda\gg=M.  
\end{equation}
Here the function $\alpha(X,\xi)\in\Fc$ at an arbitrary point $Y\in\gg$ 
can be computed in the following way: 
$$\begin{aligned}
(\alpha(X,\xi)(Y)
&=\int\limits_0^1(\lambda_{sX}(\theta_0(X,\xi))(Y)\de s 
=\int\limits_0^1(\theta_0(X,\xi))((-sX)\ast Y)\de s \\
&=\int\limits_0^1\xi((-sX)\ast Y)\de s+\int\limits_0^1\langle A((-sX)\ast Y),(R_{(-sX)\ast Y})'_0 X\rangle\de s 
\end{aligned}$$
By using the notation introduced in Proposition~\ref{nilp2} we get 
\begin{equation}\label{para_eq6}
\begin{aligned}
(\alpha(X,\xi)(Y)
&=\langle\xi,\int\limits_0^1(-sX)\ast Y\de s\rangle
  +\langle\int\limits_0^1 A((-sX)\ast Y)\de s,(R_{(-sX)\ast Y})'_0 X\rangle    \\
&=-\langle\xi,\Psi_{\gg,X}(-Y)\rangle
  +\langle\int\limits_0^1 A((-sX)\ast Y)\de s,(R_{(-sX)\ast Y})'_0 X\rangle.
\end{aligned}
\end{equation}
Now, to prove assertion~\eqref{exp_theta_inj}, just note that 
if $\exp_M(\theta(X_1,\xi_1))=\exp_M(\theta(X_2,\xi_2))$, 
then by~\eqref{exp_theta} we get $X_1=X_2=:X$ and $\alpha(X,\xi_1)=\alpha(X,\xi_2)$. 
Then by~\eqref{para_eq6} we get $\xi_1\circ\Psi_{\gg,X}=\xi_2\circ\Psi_{\gg,X}$. 
Since $\Psi_{\gg,X}\colon\gg\to\gg$ is a diffeomorphism by Proposition~\ref{nilp2}, 
it follows that $\xi_1=\xi_2$. 

We now proceed to proving the above assertion~\eqref{exp_theta_main}. 
To prove the second part of that assertion, let us assume that 
$\exp_M(\theta(X_1,\xi_1))\in\exp_M(\theta(X_2,\xi_2))M_{\tilde{\delta}_0}$. 
It then follows by \eqref{isotropy} and \eqref{exp_theta} that 
there exists $\phi\in\Fc$ such that $\phi'_{0}=0$ and 
$(\alpha(X_1,\xi_1),X_1)=(\alpha(X_2,\xi_2),X_2)(\phi,0)$. 
Thence $X_1=X_2=:X$ and $\alpha(X,\xi_1)=\alpha(X,\xi_2)+\lambda_X\phi$, 
so by~\eqref{para_eq6} we get 
$\langle\xi_2-\xi_1,\Psi_{\gg,X}(-Y)\rangle=\phi((-X)\ast Y)$ for every $Y\in\gg$. 
By means of the change of variable $(-X)\ast Y=-W$ we have $W\ast(-X)=-Y$, 
and then $\langle\xi_2-\xi_1,\Psi_{\gg,X}(W\ast(-X))\rangle=\phi(-W)$ for every $W\in\gg$. 
Now note that 
$$\Psi_{\gg,X}(W\ast(-X))=\int\limits_0^1W\ast(-X)\ast(sX)\de s
=\int\limits_0^1W\ast((1-s)X)\de s=\int\limits_0^1W\ast(sX)\de s=\Psi_{\gg,X}(W)$$
hence $\langle\xi_2-\xi_1,\Psi_{\gg,X}(W)\rangle=\phi(-W)$ for every $W\in\gg$. 
By differentiating the latter equation at $W=0$ we get 
$(\xi_2-\xi_1)\circ(\Psi_{\gg,X})'_0=\phi'_0=0$. 
Now recall that $\Psi_{\gg,X}\colon\gg\to\gg$ is a diffeomorphism by Proposition~\ref{nilp2}, 
hence $(\Psi_{\gg,X})'_0\colon\gg\to\gg$ is a linear isomorphism, and then $\xi_2-\xi_1=0$. 

This proves the second part of assertion~\eqref{exp_theta_main} 
which in particular shows that the multiplication mapping~\eqref{exp_theta_mult} is injective. 
To prove that that mapping is surjective as well, 
let $(\phi,X)\in M$ arbitrary. 
It follows by \eqref{isotropy} and \eqref{exp_theta} again that 
it will be enough to find $\xi\in\gg^*$ and $\psi\in\Fc$ such that $\psi'_0=0$ and 
$(\alpha(X,\xi),X)(\psi,0)=(\phi,X)$. 
The latter equation is equivalent to $\alpha(X,\xi)+\lambda_X\psi=\phi$, 
that is, $\lambda_{-X}(\alpha(X,\xi))+\psi=\lambda_{-X}\psi$, 
whence by \eqref{para_eq6} we get
$$(\forall Y\in\gg)\quad 
\langle\xi,\int\limits_0^1(-sX)\ast X\ast Y\de s\rangle
+\langle\int\limits_0^1 A((-sX)\ast X\ast Y)\de s,(R_{(-sX)\ast X\ast Y})'_0 X\rangle 
+\psi(Y)=\phi(X\ast Y).$$
Since $(-sX)\ast X=(1-s)X$, the above equation is further equivalent to 
\begin{equation}\label{exp_theta_surj}
(\forall Y\in\gg)\quad 
\langle\xi,\int\limits_0^1(sX)\ast Y\de s\rangle+\langle\int\limits_0^1 A((sX)\ast Y)\de s,(R_{(sX)\ast Y})'_0 X\rangle 
+\psi(Y)=\phi(X\ast Y).
\end{equation}
Since the mapping $Y\mapsto\int\limits_0^1(sX)\ast Y\de s=-\Psi_{\gg,-X}(-Y)$
is a diffeomorphism by Proposition~\ref{nilp2}, it folows that its differential at $Y=0$ 
is a linear isomorphism on $\gg$. 
Now by differentiating~\eqref{exp_theta_surj} at $Y=0$ and using the condition $\psi'_0=0$, 
we see that $\xi\in\gg^*$ can be uniquely determined in terms of the given function $\varphi\in\Fc$. 
Then we just have to solve equation~\eqref{exp_theta_surj} for $\psi$. 
This completes the proof of the fact that the multiplication mapping~\eqref{exp_theta_mult} 
is surjective. 

We now know that the mapping 
$\Pi_{\tilde{\delta}_0}\circ\exp_M\circ\theta\colon\gg\times\gg^*\to\Oc$ 
in the statement is a bijection. 
To see that it is actually a diffeomorphism, firstly note that 
$\exp_M\circ\theta\colon \gg\times\gg^*\to M$ is smooth as an easy consequence 
of \eqref{exp_theta}~and~\eqref{para_eq6}, 
and then $\Pi_{\tilde{\delta}_0}\circ\exp_M\circ\theta$ is smooth. 
To prove that its inverse is also smooth, we just have to use the fact 
that the solution $\xi$ of \eqref{exp_theta_surj} depends smoothly on the data $\phi\in\Fc$ 
(as a direct consequence of our way to solve equation~\eqref{exp_theta_surj}). 
\end{proof}

\begin{corollary}\label{predual_diffeo}
Let $\tilde{\delta}_0=(\delta_0,0)\in\mg^*$. 
The mapping 
$$\Ad_M^*(\exp_M(\cdot))\tilde{\delta}_0\colon\Oc_*\to\Oc$$
is a diffeomorphism.
\end{corollary}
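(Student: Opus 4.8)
The plan is to deduce Corollary~\ref{predual_diffeo} by
factoring the map $\Ad_M^*(\exp_M(\cdot))\tilde\delta_0\colon\Oc_*\to\Oc$
through the diffeomorphism furnished by Proposition~\ref{para}. Recall from
Definition~\ref{predual} that the linear isomorphism
$\bar\theta^A\circ I\colon\gg\times\gg^*\to\Oc_*$, $(X,\xi)\mapsto(\bar\theta_0^A(\xi,X),X)$,
identifies $\gg\times\gg^*$ with $\Oc_*$, and that by the very formulas in
Proposition~\ref{para} we have $\theta_0(X,\xi)=\bar\theta_0^A(\xi,X)$ as
elements of $\Fc$, so that $I(X,\xi)=(\xi,X)$ is carried by $\bar\theta^A$
precisely to the point $(\bar\theta_0^A(\xi,X),X)=(\theta_0(X,\xi),X)\in\Oc_*$.
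Thus the first step is to record the commuting triangle
\[
\gg\times\gg^*\xrightarrow{\ \bar\theta^A\circ I\ }\Oc_*
\xrightarrow{\ \Ad_M^*(\exp_M(\cdot))\tilde\delta_0\ }\Oc,
\]
whose composite is the map $\Pi_{\tilde\delta_0}\circ\exp_M\circ\theta$ of
Proposition~\ref{para}. To see that the composite really is that map one only
has to check $\exp_M(\theta(X,\xi))$ depends, through $\theta$, only on the
pair $(\theta_0(X,\xi),X)\in\mg$, i.e. on the point of $\Oc_*$ that
$(X,\xi)$ names; this is immediate since $\theta(X,\xi)=Z_{\theta_0(X,\xi),X}$
by definition and $Z_{\phi,X}$ is determined by $(\phi,X)$.

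The second step is purely formal: the left arrow
$\bar\theta^A\circ I$ is a \emph{linear} isomorphism of finite-dimensional
vector spaces (Definition~\ref{predual}), hence a diffeomorphism onto $\Oc_*$
with the smooth structure that this very map puts on $\Oc_*$; and the composite
$\Pi_{\tilde\delta_0}\circ\exp_M\circ\theta$ is a diffeomorphism of
$\gg\times\gg^*$ onto $\Oc$ by Proposition~\ref{para}. A map that becomes a
diffeomorphism after precomposition with a diffeomorphism is itself a
diffeomorphism; therefore
$\Ad_M^*(\exp_M(\cdot))\tilde\delta_0
=\bigl(\Pi_{\tilde\delta_0}\circ\exp_M\circ\theta\bigr)\circ(\bar\theta^A\circ I)^{-1}$
is a diffeomorphism of $\Oc_*$ onto $\Oc$, which is the assertion.

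The only point that needs a word of care — and the step I expect to be the
mild obstacle — is the bookkeeping that makes the triangle genuinely commute:
one must match the functional $\tilde\delta_0=(\delta_0,0)$ and the point
$Y\mapsto\langle\xi,Y\rangle+\langle A_Y,(R_Y)'_0X\rangle$ appearing in
$\theta_0$ with the formula $(\bar\theta_0^A(\xi,X))(Y)=\langle\xi,Y\rangle+\langle A_Y,(R_Y)'_0X\rangle$
from Proposition~\ref{pullback} (recall $\gg^*\subseteq\Fc$, so $\xi$ is
literally the function $Y\mapsto\langle\xi,Y\rangle$), and to observe that the
smooth structure on $\Oc_*$ used in the statement is exactly the one
transported from $\gg\times\gg^*$ by $\bar\theta^A\circ I$, so there is no
ambiguity about what ``diffeomorphism'' means here. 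Once this identification is
spelled out the corollary follows with no further computation.
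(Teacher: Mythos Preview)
Your proof is correct and is exactly the argument the paper intends: its one-line proof ``Use Propositions~\ref{pullback} and~\ref{para}'' unpacks precisely to the factorization you wrote, namely that $\Ad_M^*(\exp_M(\cdot))\tilde\delta_0\big|_{\Oc_*}=(\Pi_{\tilde\delta_0}\circ\exp_M\circ\theta)\circ(\bar\theta^A\circ I)^{-1}$, with the right factor a linear isomorphism (Proposition~\ref{pullback} and Definition~\ref{predual}) and the composite a diffeomorphism (Proposition~\ref{para}). Your care in matching the two uses of $\exp_M$ (on $\mg$ via Proposition~\ref{top3.6}(3)(b), and on one-parameter subgroups via Definition~\ref{top1}) through the identification $(\phi,X)\leftrightarrow Z_{\phi,X}$ is exactly the bookkeeping that makes the triangle commute.
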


\begin{proof}
Use Propositions \ref{pullback}~and~\ref{para}.
\end{proof}

\section{Magnetic Weyl calculus on Lie groups}

\subsection{Localized Weyl calculus}\label{localized}
In this subsection we sketch a general setting, 
inspired by \cite{An69} and \cite{An72}, for the Weyl calculus 
associated with continuous representations of any topological groups, 
which may be infinite-dimensional Lie groups. 
We shall apply this construction in the next subsection 
in the case of a semidirect product $M=\Fc\rtimes_{\lambda}G$, 
where $\Fc$ is a certain function space on the nilpotent Lie group~$G$.

\begin{definition}\label{top4}
\normalfont
Let $M$ be a topological group and $\pi\colon M\to\Bc(\Yc)$ 
a \so-continuous, uniformly bounded representation 
on the complex separable Banach space $\Yc$. 
Assume the setting defined by the following data: 
\begin{enumerate} 
\item 
a duality pairing $\langle\cdot,\cdot\rangle\colon\Xi^*\times\Xi\to{\mathbb R}$ 
between two real finite-dimensional vector spaces $\Xi$ and $\Xi^*$; 
\item 
a map $\theta\colon \Xi\to\Lie(M)$ which is measurable 
with respect to the natural Borel structures of $\Xi$ and~$\Lie(M)$. 
\end{enumerate}
Denote by 
$$\widehat{\cdot}
\colon L^1(\Xi)\to L^\infty(\Xi^*), \quad 
b(\cdot)\mapsto\widehat{b}(\cdot)
=\int\limits_{\Xi}\ee^{-\ie\langle\cdot,x\rangle}b(x)\,\de x $$
the \emph{Fourier transform with respect to the duality} 
$\langle\cdot,\cdot\rangle$, 
and the inverse Fourier transform 
$$\check{\cdot}
\colon L^1(\Xi^*)\to L^\infty(\Xi), \quad 
a(\cdot)\mapsto\check{a}(\cdot)
=\int\limits_{\Xi^*}\ee^{\ie\langle\xi,\cdot\rangle}a(\xi)\,\de\xi  $$
where the Lebesgue measures on $\Xi$ and $\Xi^*$ are 
suitably normalized. 

Then the corresponding \emph{localized Weyl calculus for $\pi$ along~$\theta$} 
is defined by 
\begin{equation}\label{weyl_loc}
\Op^\theta\colon\widehat{L^1(\Xi)}\to\Bc(\Yc),\quad 
\Op^\theta(a)y
=\int\limits_{\Xi} \check{a}(\xi)\pi(\exp_S(\theta(\xi)))y\,\de\xi
\text{ for }y\in\Yc\text{ and }a\in\widehat{L^1(\Xi)},
\end{equation}
where we use Bochner integrals of $\Yc$-valued functions.
\qed
\end{definition}

\begin{remark}\label{top5}
\normalfont
In the setting of Definition~\ref{top4} we note the following: 
\begin{enumerate}
\item\label{top5_item1}
We need the Banach space $\Yc$ to be separable in order to define the Bochner integral. 
Instead, we could have assumed $\Yc$ a reflexive Banach space (for instance a Hilbert space) 
and defined $W^\theta(f)\in\Bc(\Yc)$ as a weakly convergent integral.
\item\label{top5_item2}
It follows by \eqref{nat_var} that 
\begin{equation}\label{gener}
\Op^\theta(a)y
=\int\limits_{\Xi^*}\check{a}(\xi)\exp\bigl(\Lie(\pi)(\theta(\xi))\bigr)y\,  \de\xi
\end{equation}
for $y\in\Yc$ and $a\in\widehat{L^1(\Xi)}$, 
hence the localized Weyl functional calculus 
actually depends on the mapping $\Lie(\pi)\colon\Lie(M)\to\Cc(\Yc)$, 
rather than on the representation $\pi\colon M\to\Bc(\Yc)$ itself. 
If $\Yc$ is a Hilbert space, $\pi$ is a unitary representation, 
and $\theta\colon\Xi\to\Lie(M)$ is continuous, it easily follows that~\eqref{gener} makes sense for 
every bounded continuous function $a\colon\Xi^*\to\CC$ 
whose inverse Fourier transform $\check a$ is a finite Radon measure on~$\Xi$. 
It thus follows that 
for every $\xi_0\in\Xi^*$ we get the usual functional calculus 
of the self-adjoint operator $\Lie(\pi)\theta(\xi_0)$ 
by suitably extending $\Op^\theta$ to functions of the form $\xi\mapsto b(\scalar{\xi_0}{\xi})$ 
with $b\colon\RR\to\CC$. 
\item\label{top5_item3} 
The localized Weyl functional calculus for $\pi$ along~$\theta$ 
has the following covariance property: 
If $\theta'\colon\Xi^*\to\Lie(M)$ is another measurable map such that 
there exists $m\in M$ satisfying $\Ad_M(m)\circ\theta'=\theta$, then 
\begin{equation}\label{cov}
(\forall a\in\widehat{L^1(\Xi)})\quad 
\Op^{\theta}(a)=\pi(m) \Op^{\theta'}(a) \pi(m)^{-1}. 
\end{equation}
In fact, for every $\xi\in\Xi^*$ we have 
$$\exp_M(\theta(\xi))=(\theta(\xi))(1)=(\bigl(\Ad_M(m)\bigr)\theta'(\xi))(1)
=m(\theta'(\xi)(1))m^{-1}=m\exp_M(\theta'(\xi))m^{-1},$$ 
hence $\pi(\exp_M(\theta(\xi)))=\pi(m)\pi(\exp_M(\theta'(\xi)))\pi(m)^{-1}$, 
and now \eqref{cov} follows by~\eqref{weyl_loc}.  
\end{enumerate}
\qed
\end{remark}

\subsection{Magnetic pseudo-differential calculus on nilpotent Lie groups}

We are going to specialize here the ideas of subsection~\ref{localized} 
in order to construct a magnetic Weyl calculus in the setting of subsection~\ref{magn_quant}. 
Thus $G$ is a (connected and) simply connected nilpotent Lie group with $\Lie(G)=\gg$. 
Then the exponential map $\exp_G\colon\gg \mapsto G$ is a diffeomorphism, and 
we use the notation $ \log_G = \exp_G^{-1}$.
We recall that the Haar measure on the group $G$ is taken by $\log_G$ into the Lebesque measure on 
$\gg$, consequently 
the Lebesque measure on $\gg$ is invariant under the transformations 
$Y\mapsto Y\ast X$ and $ Y \mapsto (-Y)$.

Assume  $\Fc$ an admissible space of real continuous functions on $G$, which is invariant
under the left regular action, hence the mapping 
$$ \lambda\colon G \times \Fc \to  \Fc, \quad (\lambda_g\varphi)(x) = \varphi(g^{-1}x)$$
is well defined.  
Since $\Fc$ is endowed with a topology such that $\lambda$ is continuous, 
we may consider the semidirect product $M= \Fc  \rtimes_\lambda \,G$. 
Proposition~\ref{top3.6} shows that the Lie algebra of $M$ is the semidirect product 
$\Fc \rtimes_{\dot{\lambda}} \, \gg $ and the exponential map $\exp_M$ is given by 
$$ \exp_M (\varphi, X) = \big( 
\int   \limits_0^1 \lambda_{\exp_G(sX)} \varphi\, \de s, \,  \exp_G(X)\big).
$$
We denote  the duality between $\gg$ and $\gg^*$ also by 
$$\gg^* \times \gg \ni (\xi,X)\mapsto \scalar{\xi}{X}\in \RR.$$
Then we assume that the functions  $\xi\circ\log_G$ belong to $\Fc$ for every $\xi \in \gg^*$.

We set $\Xi= \gg\times \gg^*$. 
The mapping
$$ \scalar{\cdot}{\cdot}\colon \Xi\times \Xi\to \RR, \quad 
\scalar{(X_1, \xi_1)}{ (X_2, \xi_2)}= \scalar{\xi_1}{X_2}- \scalar{\xi_2}{X_1} $$ 
defines a symplectic structure on $\Xi$. 
This is, in particular, a duality pairing, $\Xi$ being self-dual with respect 
to this pairing.  
The Fourier transform associated to $ \scalar{\cdot}{\cdot}$ is given by 
$$ (F_\Xi a)(X, \xi) =\hat a(X, \xi) = \int\limits_{\Xi} \ee^{-\ie \scalar{(X, \xi)}{ (Y, \eta)}} 
a(Y, \eta) \, \de(Y, \eta), \quad a \in L^1(\Xi).
$$
It extends to an invertible operator $\Sc'(\Xi) \to \Sc' (\Xi)$, 
$F_\Xi^{-1}= F_\Xi$ 
 and we denote 
$\check a= F_{\Xi}^{-1}a$. 
Note that if $F_\gg\colon \Sc'(\gg) \to \Sc' (\gg)$ is the Fourier transform 
associated to the duality between $\gg$ and $\gg^*$ 
(normalized such that it is unitary $L^2(\gg) \to L^2(\gg^*)$)
then  
\begin{equation}\label{fourier:1}
 F_\Xi = \iota ^* (F_\gg \otimes F_\gg^{-1}) = (F_\gg^{-1} \otimes F_\gg) (\iota^{-1})^*
\end{equation}
where $\iota^*$ is the pull-back by $\iota\colon \gg^* \times \gg \to \gg \times \gg^*$, 
$ \iota(\xi, X)=(X, \xi)$.

We need a natural representation on $M$ by unitary operators  in $\Yc = L^2(\gg)$, given by
the natural induced representation described in subsection~\ref{induced-repres}.  
Namely,  $\pi\colon M \to \Bc (\Yc)$ is given by
\begin{equation}\label{repres_weyl}
\pi(\varphi, g) f(X)=\ee^{\ie \varphi(\exp_G X)} f ((-\log_G g)\ast X), \qquad f \in {\Yc}. 
\end{equation}
Then $\pi(\varphi, g) $ is unitary for every $(\varphi, g)\in M$.

Consider now $\theta_0\colon \Xi \to \Fc$ a Borel measurable function.
Then we set
\begin{equation}\label{theta_weyl}
\theta\colon \Xi \to \Lie(M), \qquad \theta(X, \xi) 
= Z_{\theta_0(X, \xi), X}, \quad (X, \xi) \in \Xi= \gg \times \gg^*,
\end{equation}
where  $Z_{\varphi, X}$ with $ (\varphi, X)\in \Fc \times \gg$ has been defined in Proposition~\ref{top3.6}.

We consider the Weyl calculus for $\pi$ along  $\theta$ above.
Recall that when $a \in F_\Xi L^1(\Xi)$
\begin{equation}\label{weyl_loc_special}
\Op^\theta(a)f
=\int\limits_{\Xi} \check{a}(X, \xi)\pi(\exp_M\theta(X, \xi))f\,\de(X, \xi) \qquad
f\in\Yc.
\end{equation}
We see that here
$$ \exp_M \theta(X, \xi) = \theta(X, \xi)(1) = \big ( \int\limits_{0}^1 
\lambda_{\exp_G(sX)} \theta_0(X, \xi)\, \de s, \, \exp_G X), $$
hence
\begin{equation}\label{weyl_pi}
\begin{aligned}
\pi(\exp_M \theta(X, \xi))f (Y)&  = \ee^{\ie   \int\limits_{0}^1  
\theta_0(X, \xi)(\exp_G (-s X) \exp_G Y )\, \de s} f((-X)\ast Y)\\
&=\ee^{\ie   \int\limits_{0}^1  \theta_0(X, \xi)(\exp_G ((-sX) \ast Y) )\, \de s} f((-X)\ast Y)\\
\end{aligned}
\end{equation}
when $f \in \Yc$.

We have thus obtained
\begin{equation}\label{weyl1001}
\Op^\theta(a) f (Y) = 
\int\limits_\Xi \check a(X, \xi) \ee^{\ie   \int\limits_{0}^1  
\theta_0(X, \xi)(\exp_G ( (-sX) \ast Y) )\, \de s} f( (-X)\ast Y)
\, \de(X, \xi).
\end{equation} 
By changing variables 
we get that
\begin{equation}\label{weyl1002}
\Op^\theta(a) f (Y) = \int\limits_\Xi \check a( Y \ast (-Z) , \xi) 
\ee^{\ie   \int\limits_{0}^1  \theta_0( Y\ast (- Z), \xi)(\exp_G ( (s (Z\ast( -Y))) \ast Y ) )\, \de s} 
f(Z)
\, \de(Z, \xi).
\end{equation}
We may use Fubini's theorem to see that  
the operator 
$\Op^\theta(a)$ is an integral operator with kernel
\begin{equation}\label{K_a:1}
 K_a(Y, Z) = \int\limits_{\gg^*} \check a(Y\ast (- Z), \xi) 
\ee^{\ie   \int\limits_{0}^1  \theta_0(Y\ast (-Z), \xi)
(\exp_G ((s (Z\ast (-Y)))\ast Y ) )\, \de s} \, \de \xi. 
\end{equation}

In  the case where $\theta_0$ is of the form
\begin{equation} \label{theta-magn}
 \theta_0(X, \xi) (x) = \scalar{\xi}{\log_G x}+\scalar{A(\log_G x)}{(R_{\log_G x})'_0X}
 \end{equation}
where $A\colon \gg \to \gg^*$ is continuous and $x\mapsto \scalar{A(\log_G x)}{(R_{\log_G x})'_0X}$
belongs to $\Fc$ for every $X\in \gg$, the expressions above can be further simplified. 
Denote 
\begin{equation}\label{alpha_A}
\alpha_A(Y, Z) = \exp\Bigl({\ie \int\limits_0^1 \scalar{A((s(Z\ast(-Y)))\ast Y)}{(R_{(s(Z\ast(-Y)))\ast Y})'_0(Y\ast (-Z))} \, \de s}\Bigr).
\end{equation}
This is a continuous  complex valued function on $\gg \times \gg$.
With this notation \eqref{K_a:1} becomes
$$ K_a(Y, Z) =   \alpha_A(Y, Z) \int\limits_{\gg^*} \check a(Y\ast (-Z), \xi) 
\ee^{\ie  \int \limits_0^1 \scalar{\xi}{(s(Z\ast (-Y))) \ast Y}  \, \de s}\de \xi . $$
Hence, in the case where $\theta_0$ is as in \eqref{theta-magn}, we get
\begin{equation}\label{K_a:2}
\begin{aligned}
K_a(Y, Z) & = \alpha_A(Y, Z) \int \limits_{\gg^*} 
(F_\gg \otimes F_{\gg ^*}  a (\xi, Y\ast (-Z)) \ee^{\ie \int \limits_0^1 \scalar{\xi}{(s(Z\ast (-Y))) \ast Y}  \, \de s} \, \de \xi\\
& = \alpha_A(Y, Z)(1\otimes F_{\gg}^{-1})a ( \int \limits_0^1 (s(Z\ast (-Y))) \ast Y) \, \de s, Y\ast (-Z)).
\end{aligned}
\end{equation}

\begin{definition}\label{magn_calc}
\normalfont
In the setting of subsection~\ref{magn_quant}, 
the simply connected nilpotent Lie group $G$ is identified with its Lie algebra $\gg$ 
by means of the exponential map. 
Let $\Fc$ an admissible space of functions on $\gg$ 
which contains both $\gg^*$ and the constant functions. 
Assume that $A\in\Omega^1(\gg)$ is 
a magnetic potential such that 
for every $X\in\gg$ the function $Y\mapsto\langle A_Y,(R_Y)'_0X\rangle$ belongs to $\Fc$ and define 
$\theta_0\colon\gg\times\gg^*\to\Fc$ as in \eqref{theta-magn} 
(or Proposition~\ref{para}).  
Then for every $a\in\Sc(\gg\times\gg^*)$  
there exists a linear operator $\Op^\theta(a)$ 
in $L^2(\gg)$ defined by \eqref{weyl1002}. 
We will denote $\Op^A(a):=\Op^\theta(a)$ and will call it 
a \emph{magnetic pseudo-differential operator} with respect to 
the magnetic potential~$A$. 
The function $a$ is the \emph{magnetic Weyl symbol} of the pseudo-differential operator $\Op^A(a)$, 
and the \emph{Weyl calculus with respect to the magnetic potential $A$} is 
the mapping $\Op^A$ which takes a function $a\in\Sc(\gg\times\gg^*)$ 
into the corresponding pseudo-differential operator. 
\qed
\end{definition}

\begin{theorem}\label{main1}
Assume that $A\in\Omega^1(\gg)$ is 
a magnetic potential such that 
for every $X\in\gg$ the function $Y\mapsto\langle A_Y,(R_Y)'_0X\rangle$ belongs to $\Fc$. 
The the Weyl calculus $\Op^A$ has the following properties: 
\begin{enumerate}
\item\label{main1_item0} 
For $P_0\in\gg$ let $A(Q)P_0$ be  the multiplication operator defined by 
the function $Y\mapsto\scalar{A_Y}{(R_Y)'_0P_0}$. 
Then the usual functional calculus for the self-adjoint operator 
$-\ie\dot{\lambda}(P_0)+ A(Q)P_0$ in $L^2(\gg)$  
can be recovered from $\Op^A$.
\item\label{main1_item1_1/2} 
Gauge covariance with respect to the magnetic potential $A$: 
If $A_1\in\Omega^1(\gg)$ is another magnetic potential with
$dA=dA_1\in\Omega^2(\gg)$ 
and the function $Y\mapsto\langle A_Y,(R_Y)'_0X\rangle$ belongs to $\Fc$ for every $X\in\gg$, 
then there exists 
$\psi\in\Fc$ such that unitary operator $U\colon L^2(\gg)\to L^2(\gg)$ 
defined by the multiplication by $\ee^{\ie\psi}$ satisfies 
$U\Op^A(a)U^{-1}=\Op^{A_1}(a)$ for every symbol $a\in\Sc(\gg\times\gg^*)$. 
 \item\label{main1_item1}
 If $\Cpol(\gg)\subseteq\Fc$ and the function $Y\mapsto\langle A_Y,(R_Y)'_0X\rangle$ belongs to $\Cpol(\gg)$ 
for every $X\in\gg$, 
then
for every $a\in\Sc(\gg\times\gg^*)$ the magnetic pseudo-differential operator $\Op^A(a)$ 
is  bounded linear on $L^2(\gg)$ and is defined by an integral kernel 
$K_a\in\Sc(\gg\times\gg)$ given by formula \eqref{K_a:2}. 
\item\label{main1_item2} 
Under the hypothesis of \eqref{main1_item1} 
the correspondence  $a\mapsto K_a$ is an isomorphism of Fr\'echet spaces 
$\Sc(\gg\times\gg^*)\to\Sc(\gg\times\gg)$
and extends to a unitary operator 
$L^2(\gg\times\gg^*)\to L^2(\gg\times\gg)$. 
\end{enumerate}
\end{theorem}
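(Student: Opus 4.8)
We prove the four assertions in turn, handling \eqref{main1_item1} and \eqref{main1_item2} together.

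\emph{Plan for \eqref{main1_item0}.} The idea is to identify the operator in the statement with $-\ie$ times the Stone generator of a one-parameter group of unitaries, and then to quote Remark~\ref{top5}\eqref{top5_item2}. Using \eqref{weyl_pi} with $X=tP_0$, $\xi=0$ together with the scaling $\theta_0(tP_0,0)=t\,\theta_0(P_0,0)$ (immediate from \eqref{theta-magn}), one gets, for $f\in\Yc$,
$$\pi\bigl(\theta(P_0,0)(t)\bigr)f(Y)=\exp\Bigl(\ie\int_0^t\theta_0(P_0,0)\bigl(\exp_G((-sP_0)\ast Y)\bigr)\,\de s\Bigr)f\bigl((-tP_0)\ast Y\bigr),$$
and differentiating at $t=0$, with the help of \eqref{lambda_diff} for the second factor, gives $\Lie(\pi)(\theta(P_0,0))=\dot{\lambda}(P_0)+\ie\,A(Q)P_0$. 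Since $\dot{\lambda}(P_0)$, the generator of the unitary one-parameter group $t\mapsto\lambda_{\exp_G(tP_0)}$ on $L^2(\gg)$, is skew-adjoint (Haar measure being left invariant) and $A(Q)P_0$ is multiplication by a real function, the operator $-\ie\bigl(\dot{\lambda}(P_0)+\ie\,A(Q)P_0\bigr)=-\ie\dot{\lambda}(P_0)+A(Q)P_0$ is self-adjoint; and Remark~\ref{top5}\eqref{top5_item2}, applied to symbols constant along the directions symplectically orthogonal to $(P_0,0)$ (equivalently, depending only on $\langle\xi,P_0\rangle$), recovers its functional calculus.

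\emph{Plan for \eqref{main1_item1_1/2}.} Since $\gg$ is contractible and $d(A_1-A)=0$, there is a smooth $\psi$ with $\de\psi=A_1-A$; one checks --- using the explicit radial primitive together with the hypothesis that the coefficient functions lie in $\Fc$, cf.\ Remark~\ref{predual_conj} --- that $\psi\in\Fc$. Then $\exp_M(\psi,0)=(\psi,\1)\in M$, and by Proposition~\ref{top3.6}\eqref{top3.6_item2.5} combined with \eqref{lambda_diff} one computes, for all $(X,\xi)$,
$$(\Ad_M\psi)\,Z_{\theta_0^A(X,\xi),X}=Z_{\theta_0^A(X,\xi)-\dot{\lambda}(X)\psi,\,X}=Z_{\theta_0^{A_1}(X,\xi),X},$$
because $(\dot{\lambda}(X)\psi)(Y)=-\langle\psi'_Y,(R_Y)'_0X\rangle=\langle(A-A_1)_Y,(R_Y)'_0X\rangle$, which is exactly $(\theta_0^A-\theta_0^{A_1})(X,\xi)$ as a function of $Y$. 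Thus $\Ad_M\bigl((\psi,\1)\bigr)\circ\theta^A=\theta^{A_1}$, where $\theta^A,\theta^{A_1}$ denote the maps \eqref{theta_weyl} associated with $A,A_1$; hence the covariance property Remark~\ref{top5}\eqref{top5_item3} yields $\Op^{A_1}(a)=\pi\bigl((\psi,\1)\bigr)\,\Op^A(a)\,\pi\bigl((\psi,\1)\bigr)^{-1}$ for every $a$, and by \eqref{repres_weyl} the unitary $U:=\pi\bigl((\psi,\1)\bigr)$ is multiplication by $\ee^{\ie\psi}$.

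\emph{Plan for \eqref{main1_item1} and \eqref{main1_item2}.} The idea is to factor $a\mapsto K_a$ and examine the factors. For $a\in\Sc(\gg\times\gg^*)$ one has $\check a=F_\Xi^{-1}a\in\Sc(\Xi)\subseteq L^1(\Xi)$, so the Fubini argument leading from \eqref{weyl1002} to \eqref{K_a:1} is legitimate and, under \eqref{theta-magn}, $\Op^A(a)$ is the integral operator with kernel \eqref{K_a:2}. Using $a'\ast a''=-((-a'')\ast(-a'))$ one rewrites the first argument in \eqref{K_a:2} as $-\Psi_{\gg,\,Y\ast(-Z)}(-Y)$, so that
$$K_a=\alpha_A\cdot\bigl(C\circ(1\otimes F_\gg^{-1})\bigr)(a),\qquad (Cg)(Y,Z):=g\bigl(\Phi(Y,Z)\bigr),\quad\Phi(Y,Z):=\bigl(-\Psi_{\gg,\,Y\ast(-Z)}(-Y),\;Y\ast(-Z)\bigr).$$
Now $1\otimes F_\gg^{-1}$ is, by \eqref{fourier:1} and Plancherel, a topological isomorphism $\Sc(\gg\times\gg^*)\to\Sc(\gg\times\gg)$ and a unitary $L^2(\gg\times\gg^*)\to L^2(\gg\times\gg)$. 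By Proposition~\ref{nilp2}, $\Psi_{\gg,V}$ and its inverse are polynomial (in the argument and, tracing through the induction via \eqref{nilp2_2star}, in $V$ as well) and Lebesgue-measure preserving; combined with the invariance of Lebesgue measure under $Y\mapsto Y\ast X$ and $Y\mapsto-Y$, this shows that $\Phi$ is a volume-preserving polynomial diffeomorphism of $\gg\times\gg$ with polynomial inverse, so $C$ is at once a topological automorphism of $\Sc(\gg\times\gg)$ (pull-back by a polynomial automorphism with polynomial inverse preserves the Schwartz space) and a unitary of $L^2(\gg\times\gg)$. Finally, under the hypothesis $\Cpol(\gg)\subseteq\Fc$ with coefficient functions in $\Cpol(\gg)$, writing $\langle A_V,(R_V)'_0X\rangle=\sum_jX_jh_j(V)$ with $h_j\in\Cpol(\gg)$ and composing with the Baker-Campbell-Hausdorff maps (polynomial, by nilpotency) shows the exponent in \eqref{alpha_A} belongs to $\Cpol(\gg\times\gg)$; therefore $\alpha_A,\overline{\alpha_A}\in\Cpol(\gg\times\gg)$, so multiplication by $\alpha_A$ is an automorphism of $\Sc(\gg\times\gg)$ and, since $|\alpha_A|\equiv1$, a unitary of $L^2(\gg\times\gg)$. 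Composing the three factors yields \eqref{main1_item2}, and then \eqref{main1_item1} follows because $K_a\in\Sc(\gg\times\gg)\subseteq L^2(\gg\times\gg)$ makes $\Op^A(a)$ Hilbert-Schmidt, hence bounded on $L^2(\gg)$.

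\emph{Main obstacle.} The delicate step is the analysis of the middle factor in \eqref{main1_item1}--\eqref{main1_item2}: one must show that $\Phi$ and $\Phi^{-1}$ are genuinely polynomial and that $\Phi$ preserves Lebesgue measure, so that $C$ is simultaneously a topological automorphism of $\Sc(\gg\times\gg)$ and a unitary of $L^2(\gg\times\gg)$; this rests on the structural information extracted in the proof of Proposition~\ref{nilp2}, in particular that $(\Psi_{\gg,V})^{-1}$ is polynomial jointly in $V$ and in its argument. A secondary point is the verification that the primitive $\psi$ in \eqref{main1_item1_1/2} belongs to the admissible space $\Fc$.
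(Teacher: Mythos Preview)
Your proof is correct and follows essentially the same line as the paper's. The only cosmetic difference is in parts \eqref{main1_item1}--\eqref{main1_item2}: the paper factors your map $\Phi$ (which it calls $\Sigma$) explicitly as $\Sigma_2\circ\Sigma_1$ with $\Sigma_1(Y,Z)=(-Y,Y\ast(-Z))$ and $\Sigma_2(V,W)=(-\Psi_{\gg,W}(V),W)$, so that the measure-preserving polynomial diffeomorphism property reduces directly to Proposition~\ref{nilp2} applied to $\Sigma_2$; you instead invoke Proposition~\ref{nilp2} on $\Phi$ as a whole and correctly flag the need for joint polynomiality of $(\Psi_{\gg,V})^{-1}$ in $V$ and its argument, which the paper uses implicitly.
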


\begin{proof}
Assertion~\eqref{main1_item0} follows by Remark~\ref{top5}\eqref{top5_item2} along with the fact that 
for the representation~\eqref{repres_weyl} 
we get by \eqref{weyl_pi}
$$\begin{aligned}
\pi(\exp_M(\theta(tP_0,0)))f(Y)
&=\ee^{\ie\int\limits_{0}^1  \theta_0(tP_0,0)((-stP_0) \ast Y)\, \de s} f((-tP_0)\ast Y)
\end{aligned}
$$
where 
$$\begin{aligned}
\int\limits_0^1\theta_0(tP_0,0)((-stP_0) \ast Y)\, \de s 
&=\int\limits_0^1\scalar{A((-stP_0) \ast Y)}{(R_{(-stP_0) \ast Y})'_0(tP_0)}\,
\de s \\
&=\int\limits_0^t\scalar{A((-sP_0) \ast Y)}{(R_{(-sP_0)\ast Y})'_0P_0}\,\de s. 
\end{aligned}$$
Hence 
$$\frac{\de}{\de t}\Big\vert_{t=0}\pi(\exp_M(\theta(tP_0,0)))f
=\dot\lambda(P_0)f+\ie (A(Q)P_0)f $$
for $f\in L^2(\gg)$ such that the right-hand side belongs to~$L^2(\gg)$. 
See Remark~\ref{top5}\eqref{top5_item2} for the way the functional calculus 
of the self-adjoint operator $-\ie\dot\lambda(P_0)+A(Q)P_0$ can be recovered. 

For assertion~\eqref{main1_item1_1/2} note that if $d(A-A_1)=0$ on $\gg$ 
hence we have $d\psi=A-A_1$ for the function $\psi\colon\gg\to\RR$ defined 
by $\psi(X)=\int\limits_0^1\langle (A-A_1)_{tX},X\rangle \, \de t$. 
In particular $\psi\in\Fc$ and it follows by Proposition~\ref{top3.6}\eqref{top3.6_item2.5} 
that in the group $M=\Fc\rtimes_\lambda\gg$ we have 
$$(\Ad_M\psi)\theta(X,\xi)=\theta_1(X,\xi)$$ 
for every $X\in\gg$ and $\xi\in\gg^*$, 
where $\theta_1(X,\xi)$ is obtained as in \eqref{theta_weyl} 
with $\theta_0(X,\xi)$ replaced by the function 
$Y\mapsto\xi(Y)+\langle (A_1)_Y,(R_Y)'_0X\rangle$. 
Now Remark~\ref{top5}\eqref{top5_item3} shows that the assertion holds 
with $U=\pi(\psi)\colon L^2(\gg)\to L^2(\gg)$. 
Also note that, according to~\eqref{repres_weyl}, 
$U$ is actually the multiplication operator by the function $\ee^{\ie\psi}$.

Now assume the hypothesis of assertions \eqref{main1_item1}~and~\eqref{main1_item2} 
and remember that 
the first of these properties had been already proved 
in the discussion preceding Definition~\ref{magn_calc}. 
Further note that  
$\alpha_A(\cdot), \alpha_A(\cdot)^{-1}\in\Cpol(\gg\times\gg)$ by~\eqref{alpha_A}. 
Since moreover $\vert\alpha(\cdot)\vert=1$, we see from formula~\eqref{K_a:2} 
that in order to show prove the asserted properties of the correspondence $a\mapsto K_a$ 
it will be enough to check that the mapping 
\begin{equation*}
\Sigma\colon\gg\times\gg\to\gg\times\gg,\quad 
\Sigma(Y,Z)=\Bigl( \int \limits_0^1 (s(Z\ast (-Y))) \ast Y) \, \de s, Y\ast (-Z) \Bigr)
\end{equation*}
is a polynomial diffeomorphism whose inverse is polynomial and 
which preserves the Lebesgue measure on $\gg\times\gg$. 
For this purpose let us note that $\Sigma=\Sigma_2\circ\Sigma_1$, where the mappings 
$\Sigma_1,\Sigma_2\colon\gg\times\gg\to\gg\times\gg$ are defined by 
$$
\Sigma_1(Y,Z)=(-Y,Y\ast(-Z))
\quad\text{and}\quad 
\Sigma_2(V,W)=\Bigl(-\int\limits_0^1 V\ast(sW)\,\de s,W\Bigr).
$$ 
By using the fact that $\gg$ is a nilpotent Lie algebra it is straightforward 
to prove that $\Sigma_1$ is a measure-preserving polynomial diffeomorphism whose inverse is polynomial, 
and so is $\Sigma_2$ because of Proposition~\ref{nilp2}. 
This completes the proof. 
\end{proof}

\begin{definition}\label{diez}
\normalfont
If we assume that $\Cpol(\gg)\subseteq\Fc$ and $A\in\Omega^1(\gg)$ 
has the property that 
the function $Y\mapsto\langle A_Y,(R_Y)'_0X\rangle$ belongs to $\Cpol(\gg)$ 
for every $X\in\gg$, 
then it follows by Theorem~\ref{main1}\eqref{main1_item1} that 
for every $a_1,a_2\in\Sc(\gg\times\gg^*)$ there exists a unique function 
$a_1\#^A  a_2\in\Sc(\gg\times\gg^*)$ such that 
$\Op^A(a_1)\Op^A(a_2)=\Op^A(a_1\#^A  a_2)$ 
and the \emph{magnetic Moyal product}
$$\Sc(\gg\times\gg^*)\times\Sc(\gg\times\gg^*)\to \Sc(\gg\times\gg^*),\quad 
(a_1,a_2)\mapsto a_1\#^A  a_2 $$
is a bilinear continuous mapping. 
For the sake of simplicity we denote $a_1\#a_2:=a_1\#^A  a_2$ 
whenever the magnetic potential $A$ had been already specified.
\qed
\end{definition}

\subsection{The magnetic Moyal product for two-step nilpotent Lie algebras}\label{2step}

In this subsection we shall assume that $\gg$ is a \emph{two-step nilpotent Lie algebra}, 
that is, $[\gg,[\gg,\gg]]=\{0\}$ and moreover $\Cpol(\gg)\subseteq\Fc$ and $A\in\Omega^1(\gg)$ 
is a magnetic potential such that $\langle A(\cdot),X\rangle\in\Cpol(\gg)$ for every $X\in\gg$. 

\begin{lemma}\label{2step_lemma}
The following assertions hold in the two-step nilpotent Lie algebra $\gg$: 
\begin{enumerate}
\item\label{2step_lemma_item1} 
For every $X,Y\in\gg$ we have 
$(s(X\ast (-Y))) \ast Y=sX+(1-s)Y$ for arbitrary $s\in\RR$ and 
$\int \limits_0^1 (s(X\ast (-Y))) \ast Y \, \de s=\frac{1}{2}(X+Y)$. 
\item\label{2step_lemma_item2} 
For arbitrary $X,Y,Z,T\in\gg$ we have 
$$\begin{cases}
X=\frac{1}{2}(Y+Z) & \\
T=Y\ast(-Z)
\end{cases}
\iff 
\begin{cases}
Y=(\frac{1}{2}T) \ast X &\\
Z=(-\frac{1}{2}T)\ast X.
\end{cases} $$
Moreover, the diffeomorphism $\gg\times\gg\to\gg\times\gg$, $(Y,Z)\mapsto(\frac{1}{2}(Y+Z),Y\ast(-Z))$ 
preserves the Lebesgue measure. 
\item\label{2step_lemma_item3} 
For arbitrary $X,Z,T,z,t\in\gg$ we have 
$$\begin{cases}
\frac{1}{2}((\frac{1}{2}T)\ast X)+Z)=z & \\
\frac{1}{2}(Z+((-\frac{1}{2}T)\ast X))=t
\end{cases}
\iff 
\begin{cases}
T=2(z-t) + [X, z-t] &\\
Z=z+t-X.
\end{cases} $$
Moreover, the diffeomorphism $\gg\times\gg\to\gg\times\gg$, $(Z, T)\mapsto(z, t)$ 
preserves the Lebesgue measure. 
\end{enumerate}
\end{lemma}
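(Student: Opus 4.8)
The plan is to reduce everything to the explicit Baker--Campbell--Hausdorff formula, which in a two-step nilpotent Lie algebra reads simply $X\ast Y=X+Y+\frac12[X,Y]$, together with the facts that $[\gg,\gg]$ is central (so every iterated bracket of length $\ge3$ vanishes) and that $\ad_X$ is a nilpotent operator for each $X\in\gg$. For assertion~\eqref{2step_lemma_item1}, I would first expand $X\ast(-Y)=X-Y-\frac12[X,Y]$, so that $s(X\ast(-Y))=sX-sY-\frac s2[X,Y]$, and then apply the formula once more: $(s(X\ast(-Y)))\ast Y=sX-sY-\frac s2[X,Y]+Y+\frac12[sX-sY-\frac s2[X,Y],Y]$. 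Here $[sX-sY-\frac s2[X,Y],Y]=s[X,Y]$ because $[Y,Y]=0$ and the double bracket $[[X,Y],Y]$ vanishes; the two $\pm\frac s2[X,Y]$ terms cancel and one is left with $sX+(1-s)Y$. Integrating in $s\in[0,1]$ gives $\frac12(X+Y)$.

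For assertion~\eqref{2step_lemma_item2}, I would compute $(\tfrac12T)\ast X=\tfrac12T+X+\tfrac14[T,X]$ and $(-\tfrac12T)\ast X=-\tfrac12T+X-\tfrac14[T,X]$. In the forward direction, from $X=\tfrac12(Y+Z)$ and $T=Y\ast(-Z)=Y-Z-\tfrac12[Y,Z]$ one gets $Y+Z=2X$ and $Y-Z=T+\tfrac12[Y,Z]$, hence $Y=X+\tfrac12T+\tfrac14[Y,Z]$, $Z=X-\tfrac12T-\tfrac14[Y,Z]$; since $[Y,Z]\in[\gg,\gg]$ is central, reducing these modulo $[\gg,\gg]$ gives $[Y,Z]=[X+\tfrac12T,X-\tfrac12T]=[T,X]$, which yields exactly $Y=(\tfrac12T)\ast X$ and $Z=(-\tfrac12T)\ast X$. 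The converse is a plain substitution: $(\tfrac12T)\ast X+(-\tfrac12T)\ast X=2X$ and a short cancellation shows $((\tfrac12T)\ast X)\ast(-(-\tfrac12T)\ast X)=T$. For the measure statement I would factor the map $(Y,Z)\mapsto(\tfrac12(Y+Z),Y\ast(-Z))$ as $(Y,Z)\mapsto(Y,Y\ast(-Z))\mapsto(\tfrac12(Y+Z),Y\ast(-Z))$: in the first step, for fixed $Y$ the map $Z\mapsto Y\ast(-Z)$ is the composition of $Z\mapsto-Z$ and the $\ast$-translation $W\mapsto Y\ast W$, both measure-preserving (as recalled just before the lemma), so Fubini applies; in the second step, for fixed $T$ the first coordinate is the affine map $Y\mapsto Y-\tfrac12T-\tfrac14[T,Y]$ with linear part $\id-\tfrac14\ad_T$, which is unipotent hence has determinant $1$, and Fubini applies again.

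Assertion~\eqref{2step_lemma_item3} follows the same recipe. Subtracting the two left-hand equations and using $(\tfrac12T)\ast X-(-\tfrac12T)\ast X=T+\tfrac12[T,X]$ gives $T+\tfrac12[T,X]=2(z-t)$; since $[T,X]$ is central, reduction modulo $[\gg,\gg]$ shows $[T,X]=[2(z-t),X]=-2[X,z-t]$, whence $T=2(z-t)+[X,z-t]$, while adding the two equations gives $X+Z=z+t$, i.e. $Z=z+t-X$. The reverse implication is obtained by substituting these formulas back in and simplifying with $[[X,z-t],X]=0$. For the measure statement one computes the Jacobian of $(Z,T)\mapsto(z,t)$ directly; the Lie-bracket contributions to the differential are of the form $\mp\tfrac18\ad_X$, i.e. nilpotent operators commuting with everything relevant, so the determinant is a nonzero constant equal to that of the constant-coefficient model, and with the normalization of Lebesgue measures fixed as in Definition~\ref{top4} the map preserves Lebesgue measure.

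I expect the only real obstacle to be bookkeeping: keeping signs straight in the iterated BCH expansions and in the definitions of $(\pm\tfrac12T)\ast X$, and arranging the three Jacobian computations so that the nilpotency of $\ad_X$ is exploited cleanly rather than by expanding $2\dim\gg\times2\dim\gg$ determinants by hand. Conceptually nothing is deep once the two-step BCH identity $X\ast Y=X+Y+\tfrac12[X,Y]$ and the centrality of $[\gg,\gg]$ are brought to bear.
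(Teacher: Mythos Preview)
Your proposal is correct and follows the same approach as the paper: direct expansion via the two-step BCH formula $X\ast Y=X+Y+\tfrac12[X,Y]$ together with centrality of $[\gg,\gg]$. The only differences are cosmetic---you obtain $[Y,Z]=[T,X]$ in part~\eqref{2step_lemma_item2} by reducing modulo $[\gg,\gg]$ where the paper instead applies $\ad_Z$ and $\ad_T$ to the defining equations, and for the measure assertion in~\eqref{2step_lemma_item2} you factor the map rather than computing the Jacobian outright---while your caution about normalization in part~\eqref{2step_lemma_item3} is in line with the paper, which is equally brief there.
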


\begin{proof}
\eqref{2step_lemma_item1} 
Indeed, for arbitrary $s\in\RR$ we have 
$$\begin{aligned}
(s(X\ast (-Y))) \ast Y
&= (s(X-Y-\frac{1}{2}[X,Y]))) \ast Y  
 = sX-sY-\frac{s}{2}[X,Y]+Y+ \frac{s}{2}[X,Y]\\
&=sX+(1-s)Y.
\end{aligned}$$
\eqref{2step_lemma_item2} 
For the implication ``$\Rightarrow$'' note that $T=Y\ast(-Z)$ 
actually means $T=Y-Z-\frac{1}{2}[Y,Z]$. 
If we apply $-\ad_{\gg}Z$ to both sides of the latter equation, 
then we get $[T,Z]=[Y,Z]$, and then 
\begin{equation}\label{2step_lemma_eq}
T=Y-Z-\frac{1}{2}[T,Z].
\end{equation}
On the other hand, the first of the assumed equations implies $2X=Y+Z$, 
and then we can eliminate $Y$ between this equation and~\eqref{2step_lemma_eq}. 
We thus get $2X-T=2Z+\frac{1}{2}[T,Z]$, 
and then by applying $\ad_{\gg}T$ to both sides of this equality we get $[T,X]=[T,Z]$. 
It then follows by~\eqref{2step_lemma_eq} that $T=Y-Z-\frac{1}{2}[T,X]$. 
Since $2X=Y+Z$, we get 
$$\begin{cases}
Y=X+\frac{1}{2}T+\frac{1}{4}[T,X]=(\frac{1}{2}T)\ast X & \\
Z=X-\frac{1}{2}T-\frac{1}{4}[T,X]=(-\frac{1}{2}T)\ast X.
\end{cases}$$
This concludes the proof of the implication ``$\Rightarrow$'', 
and the converse implication can be easily proved in a similar manner. 
The assertion regarding the measure-preserving property can be 
easily checked by computing the Jacobian of the diffeomorphism. 

\eqref{2step_lemma_item3}
We have 
$$\begin{cases}
\frac{1}{2}(((\frac{1}{2}T)\ast X)+Z)=z & \\
\frac{1}{2}(Z+((-\frac{1}{2}T)\ast X))=t
\end{cases}
\iff 
\begin{cases}
X+\frac{1}{2}T-\frac{1}{4}[X,T]+Z=2z &\\
X-\frac{1}{2}T+\frac{1}{4}[X,T]+Z=2t,
\end{cases} $$
and now the conclusion follows at once.
\end{proof}

\begin{theorem}\label{2step_th}
Assume that $\gg$ is a two-step nilpotent Lie algebra, 
$\Cpol(\gg)\subseteq\Fc$, and $A\in\Omega^1(\gg)$ 
is a magnetic potential such that $\langle A(\cdot),X\rangle\in\Cpol(\gg)$ for every $X\in\gg$. 
Then the following assertions hold: 
\begin{enumerate}
\item\label{2step_th_item1} 
For every $a\in\Sc(\gg\times\gg^*)$ 
the integral kernel of the operator $\Op^A(a)\colon L^2(\gg)\to L^2(\gg)$ 
is given by the formula 
\begin{equation}\label{K_a:2step}
K_a(Y,Z)=\alpha_A(Y,Z)(1\otimes F_{\gg}^{-1})a (\frac{1}{2}(Y+Z), Y\ast (-Z))
\end{equation}
where 
\begin{equation}\label{alpha_A:2step}
\alpha_A(Y, Z) = \exp\Bigl({-\ie \int\limits_0^1\langle A(sZ+(1-s)Y),Z\ast(-Y)\rangle \, \de s}\Bigr)
\end{equation}
for every $Y,Z\in\gg$.
\item\label{2step_th_item2} 
Set 
$$\beta_A\colon\gg\times\gg\times\gg\to\CC,\quad 
\beta(X,Y,Z)=\alpha_A^{-1}(X,Y)\alpha_A(Y,Z)\alpha_A(Z,X). $$
If $a,b\in\Sc(\gg\times\gg^*)$ then 
\begin{equation}\label{diez:2step}
\begin{aligned}
(a\#b)(X,\xi)
=\iiiint\limits_{\gg\times\gg\times\gg^*\times\gg^*}
&a(Z,\zeta)b(T,\tau)\ee^{2 \ie \scalar{(Z-X,\zeta-\xi)}{(T-X,\tau-\xi)}} 
\ee^{-\ie (\scalar{\xi+\zeta}{[X,Z]}+ \scalar{\zeta+\tau}{[Z,T]}+
\scalar{\tau+\xi}{[T,X]})}
\times 
\\
&\beta_A(Z-T+X, T-Z+X,Z+T-X)
\,\de\zeta\de\tau\de Z\de T
\end{aligned}
\end{equation}
for every $(X,\xi)\in\gg\times\gg^*$. 
\end{enumerate}
\end{theorem}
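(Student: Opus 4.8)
The plan is to deduce both assertions from the general kernel formula~\eqref{K_a:2}, feeding in the three elementary identities collected in Lemma~\ref{2step_lemma}.

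Assertion~\eqref{2step_th_item1} is a direct specialization. By Lemma~\ref{2step_lemma}\eqref{2step_lemma_item1} one has $(s(Z\ast(-Y)))\ast Y=sZ+(1-s)Y$, hence $\int_0^1(s(Z\ast(-Y)))\ast Y\,\de s=\tfrac12(Y+Z)$; inserting this into \eqref{K_a:2} gives \eqref{K_a:2step}, while (using that $(R_Y)'_0=\id_\gg$ in a two-step algebra, as noted in the proof of Proposition~\ref{pullback}) the exponent in \eqref{alpha_A} reduces to $\int_0^1\scalar{A(sZ+(1-s)Y)}{Y\ast(-Z)}\,\de s$, which coincides with the exponent of \eqref{alpha_A:2step} because $Y\ast(-Z)=-(Z\ast(-Y))$ in a two-step algebra.

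For assertion~\eqref{2step_th_item2} the starting point is the composition identity $\Op^A(a)\Op^A(b)=\Op^A(a\#b)$ of Definition~\ref{diez}, legitimate by Theorem~\ref{main1}\eqref{main1_item1}; since all three kernels lie in $\Sc(\gg\times\gg)$, this is equivalent by Fubini to $K_{a\#b}(Y_0,Z_0)=\int_\gg K_a(Y_0,W)\,K_b(W,Z_0)\,\de W$, the integral being taken with respect to the Lebesgue measure on~$\gg$. I would substitute \eqref{K_a:2step} for each of the three kernels. On the left-hand side, Lemma~\ref{2step_lemma}\eqref{2step_lemma_item2} lets me parametrize $Y_0=(\tfrac12 T_0)\ast X$, $Z_0=(-\tfrac12 T_0)\ast X$, so that $\tfrac12(Y_0+Z_0)=X$ and $Y_0\ast(-Z_0)=T_0$; since this parametrization preserves the Lebesgue measure, applying $1\otimes F_\gg$ in the second variable $T_0$ recovers $(a\#b)(X,\xi)$. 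On the right-hand side I would write $(1\otimes F_\gg^{-1})a$ and $(1\otimes F_\gg^{-1})b$ as integrals in dual variables $\zeta,\tau\in\gg^*$, so that $(a\#b)(X,\xi)$ appears as a fourfold integral over $(W,T_0,\zeta,\tau)\in\gg\times\gg\times\gg^*\times\gg^*$ whose integrand involves $a(\tfrac12(Y_0+W),\zeta)$, $b(\tfrac12(W+Z_0),\tau)$, the three $\alpha_A$-phases, and the three phases $\scalar{\xi}{T_0}$, $\scalar{\zeta}{Y_0\ast(-W)}$ and $\scalar{\tau}{W\ast(-Z_0)}$.

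The decisive step is then the change of variables replacing $(W,T_0)$ by $(Z,T):=(\tfrac12(Y_0+W),\tfrac12(W+Z_0))$, which is exactly the measure-preserving diffeomorphism of Lemma~\ref{2step_lemma}\eqref{2step_lemma_item3}; its explicit inverse $T_0=2(Z-T)+[X,Z-T]$, $W=Z+T-X$, combined with the two-step Baker--Campbell--Hausdorff identity $U\ast V=U+V+\tfrac12[U,V]$, gives $Y_0=Z-T+X$, $W=Z+T-X$, $Z_0=T-Z+X$, together with the identity $T_0=Y_0\ast(-Z_0)$ and closed expressions for $Y_0\ast(-W)$ and $W\ast(-Z_0)$ in terms of $X$, $Z$, $T$ and the brackets $[X,Z]$, $[Z,T]$, $[T,X]$. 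Collecting the three scalar phases, and using the symplectic pairing on $\Xi=\gg\times\gg^*$, reproduces the quadratic factor $\ee^{2\ie\scalar{(Z-X,\zeta-\xi)}{(T-X,\tau-\xi)}}$ and the commutator factor of \eqref{diez:2step}; collecting the three $\alpha_A$-phases yields the symbol-independent factor $\alpha_A(Y_0,W)\alpha_A(W,Z_0)\alpha_A(Y_0,Z_0)^{-1}$, which, since $\alpha_A(U,V)\alpha_A(V,U)=1$ in a two-step algebra, equals $\beta_A(Z-T+X,T-Z+X,Z+T-X)$; pulling this factor out of the integral and relabelling the remaining variables produces \eqref{diez:2step}. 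I expect this last bookkeeping to be the main obstacle: one has to expand several iterated $\ast$-products with the BCH formula, repeatedly using that $[\gg,\gg]$ is central so that all nested commutators drop out, and keep careful track of the normalizations and signs in the two Fourier transforms between $\gg$ and $\gg^*$ so that every term of \eqref{diez:2step} emerges with the right sign.
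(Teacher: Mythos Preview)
Your proposal is correct and follows essentially the same route as the paper: part~\eqref{2step_th_item1} is obtained by plugging Lemma~\ref{2step_lemma}\eqref{2step_lemma_item1} into \eqref{K_a:2} and \eqref{alpha_A}, and part~\eqref{2step_th_item2} is obtained by composing kernels, inverting \eqref{K_a:2step} via the parametrization of Lemma~\ref{2step_lemma}\eqref{2step_lemma_item2}, and then applying the measure-preserving change of variables of Lemma~\ref{2step_lemma}\eqref{2step_lemma_item3}. Your variable names differ (your $W,T_0,Z,T$ correspond to the paper's $Z,T,z,t$), and you make explicit the identity $\alpha_A(U,V)\alpha_A(V,U)=1$ needed to match the product of the three $\alpha_A$-factors with the definition of $\beta_A$, a step the paper leaves implicit; otherwise the arguments coincide.
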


\begin{proof}
Formulas \eqref{K_a:2step}~and~\eqref{alpha_A:2step} follow at once 
by using \eqref{K_a:2}~and~\eqref{alpha_A}, respectively, 
and taking into account Lemma~\ref{2step_lemma}\eqref{2step_lemma_item1}. 

In order to prove~\eqref{diez:2step}, note first that by \eqref{K_a:2step} 
and Lemma~\ref{2step_lemma}\eqref{2step_lemma_item2} 
we have for every $c\in\Sc(\gg\times\gg^*)$ 
and $X,T\in\gg$ the equation 
$K_c((\frac{1}{2}T)\ast X,(-\frac{1}{2}T)\ast X)
=\alpha_A((\frac{1}{2}T)\ast X,(-\frac{1}{2}T)\ast X)(1\otimes F_{\gg}^{-1})c(X,T)$, 
whence 
\begin{equation*}
c(X,\xi)=\int\limits_{\gg}\ee^{-\ie\langle\xi,T\rangle}
(\alpha_A^{-1}K_c)((\frac{1}{2}T)\ast X ,(-\frac{1}{2}T)\ast X)\,\de T.
\end{equation*}
for every $c\in\Sc(\gg\times\gg^*)$. 
Hence, by using the well-known formula for the integral kernel of the product 
of two operators defined by integral kernels, we get 
\begin{equation*}
\begin{aligned}
(a\#b)(X,\xi)
&=\int\limits_{\gg}\ee^{-\ie\langle\xi,T\rangle}
(\alpha_A^{-1}K_{a\#b})((\frac{1}{2}T)\ast X ,(-\frac{1}{2}T)\ast X)\,\de T \\
&=\int\limits_{\gg}\int\limits_{\gg}\ee^{-\ie\langle\xi,T\rangle}
\alpha_A^{-1}((\frac{1}{2}T)\ast X ,(-\frac{1}{2}T)\ast X)
K_a((\frac{1}{2}T)\ast X,Z)K_b(Z,(-\frac{1}{2}T)\ast X)\,\de Z\de T 
\end{aligned}
\end{equation*}
On the other hand, by~\eqref{K_a:2step} we get 
$$\begin{aligned}
   K_a((\frac{1}{2}T)\ast X,Z)
&=\alpha_A((\frac{1}{2}T)\ast X,Z)
   (1\otimes F_{\gg}^{-1})
   a(\frac{1}{2}(((\frac{1}{2}T)\ast X)+Z),(\frac{1}{2}T)\ast X\ast (-Z))  \\
&=\alpha_A((\frac{1}{2}T)\ast X,Z)
\int\limits_{\gg^*}\ee^{\ie\scalar{\zeta}{(\frac{1}{2}T)\ast X\ast (-Z)}}
   a(\frac{1}{2}(((\frac{1}{2}T)\ast X )+Z),\zeta) \,\de \zeta
\end{aligned}$$
and also by~\eqref{K_a:2step} we have similarly 
$$\begin{aligned}
K_b(Z,(-\frac{1}{2}T)\ast X)
&=\alpha_A(Z,(-\frac{1}{2}T)\ast X)
  (1\otimes F_{\gg}^{-1})
  b(\frac{1}{2}(Z+((-\frac{1}{2}T)\ast X)),Z\ast(- X)\ast (\frac{1}{2}T)) \\
&=\alpha_A(Z,(-\frac{1}{2}T)\ast X)\int\limits_{\gg^*}
\ee^{\ie\scalar{\tau}{Z\ast (-X)\ast (\frac{1}{2}T)}}
   b(\frac{1}{2}(Z+((-\frac{1}{2}T)\ast X)),\tau)\,\de\tau
\end{aligned}$$
We plug in these formulas in the above expression of the magnetic Moyal product $a\#b$ and get 
\begin{equation*}
\begin{aligned}
(a\#b)(X,\xi)
=\int\limits_{\gg}&\int\limits_{\gg}\int\limits_{\gg^*}\int\limits_{\gg^*}
\alpha_A^{-1}((\frac{1}{2}T)\ast X,(-\frac{1}{2}T)\ast X)
\alpha_A((\frac{1}{2}T)\ast X,Z)\alpha_A(Z,(-\frac{1}{2}T)\ast X)\times \\
&\ee^{\ie E(\zeta,\tau,Z,T)}
a(\frac{1}{2}(((\frac{1}{2}T)\ast X)+Z),\zeta)
b(\frac{1}{2}(Z+((-\frac{1}{2}T)\ast X)),\tau)\,\de\zeta\de\tau\de Z\de T
\end{aligned}
\end{equation*}
where 
\begin{equation*}
E(\zeta,\tau,Z,T)=-\scalar{\xi}{T}+\scalar{\zeta}{(\frac{1}{2}T)\ast X \ast(-Z)}
+\scalar{\tau}{Z\ast (-X)\ast (\frac{1}{2}T)}.
\end{equation*}
We change of variables $(Z,T)\mapsto(z,t)$ of Lemma~\ref{2step_lemma}\eqref{2step_lemma_item3}.
In these new variables we have
$$ \begin{aligned} 
(\frac{1}{2}T) \ast X  &= 2z -Z = z-t+X,\\
(-\frac{1}{2}T) \ast X  &= 2t -Z = t-z+X.
\end{aligned}
$$
It follows that 
\begin{equation*}
\begin{aligned}
(a\#b)(X,\xi)
=\int\limits_{\gg}&\int\limits_{\gg}\int\limits_{\gg^*}\int\limits_{\gg^*}
\beta_A(z-t+X, t-z+X,z+t-X)\times \\
&\ee^{\ie E(\zeta,\tau,z+t-x, 2(z-t)+[X, z-t])}
a(z,\zeta)b(t,\tau)\,\de\zeta\de\tau\de z\de t.
\end{aligned}
\end{equation*}
Note that in the change of variables above we have
$$ 
\begin{aligned} (\frac{1}{2} T) \ast X\ast (-Z) & = (2z-Z) \ast(-Z)\\
& = 2(z-Z) +\frac{1}{2}[2z-Z, -Z]\\
& = 2 (X-t) + [z, X-t],
\end{aligned}
$$
and similarly 
$$ 
\begin{aligned} 
Z \ast (-X) \ast (\frac{1}{2} T)  & = Z\ast (Z-2t) \\
& = 2(Z-t) +\frac{1}{2}[Z, Z-2t]\\
& = 2 (z-X) + [t, z-X],
\end{aligned}
$$
Thus
$$\begin{aligned}
E(\zeta,\tau,z+t-x, 2(z-t)+[X, z-t] )
=
&-\scalar{\xi}{2(z-t)+[X, z-t]}\\ &  
+\scalar{\tau}{2(X-t) + [z, X-t]}
+\scalar{\tau}{2(z-X)+[t, z-X]}\\
=&\scalar{2(\tau-\xi)}{z-X}-\scalar{2(\zeta-\xi)}{t-X} \\
& - \ie (\scalar{\xi+\zeta}{[X,z]}+ \scalar{\zeta+\tau}{[z,t]}+
\scalar{\tau+\xi}{[t,X]}
\end{aligned}
$$
and this completes the proof of~\eqref{diez:2step}. 
\end{proof}

It is clear that in the case when $\gg$ is an abelian Lie algebra, 
formula~\eqref{diez:2step} specializes to the formula for 
the magnetic Moyal product on $\RR^n$; see \cite{KO04} and \cite{MP04}. 
If moreover the magnetic potential $A\in\Omega^1(\gg)$ vanishes, 
then one recovers the formula for the composition 
of pseudo-differential operators in the framework of the Weyl calculus; 
see Section~18.5 in \cite{Hor07}. 

\subsection*{Acknowledgment}
We are grateful to Anders~Melin, Benjamin~Cahen, and Horia~Cornean 
for their kind help.

\end{document}